\newtheorem{example}{Example}
\newtheorem{definition}{Definition}
\newtheorem{theorem}{Theorem}
\newtheorem{lemma}{Lemma}
\newtheorem{proof}{Proof}
\DeclareMathOperator*{\argmax}{argmax}
\title{Discovery of Paradigm Dependencies}
\author{%
{Jizhou Sun{\small $~^{1}$}, Jianzhong Li{\small $~^{2}$}, Hong Gao{\small $~^{3}$} }%
\vspace{1.6mm}\\
\fontsize{10}{10}\selectfont\itshape
School of Computer Science and Technology, Harbin Institute of Technology\\
92 West Dazhi Street, Nan Gang District, Harbin, China\\
\fontsize{9}{9}\selectfont\ttfamily\upshape
%
$^{1}$\,sjzh@hit.edu.cn, $^{2}$\,lijzh@hit.edu.cn, $^{3}$\,honggao@hit.edu.cn
}
\begin{document}
\maketitle
\begin{abstract}
Missing and incorrect values often cause serious consequences.
To deal with these data quality problems, a class of common employed tools are dependency rules, such as \emph{Functional Dependencies} (FDs), \emph{Conditional Functional Dependencies} (CFDs) and \emph{Edition Rules} (ERs), etc.
The stronger expressing ability a dependency has, data with the better quality can be obtained.
To the best of our knowledge, all previous dependencies treat each attribute value as a non-splittable whole.
Actually however, in many applications, part of a value may contains meaningful information, indicating that more powerful dependency rules to handle data quality problems are possible.

In this paper, we consider of discovering such type of dependencies in which the left hand side is part of a regular-expression-like \emph{paradigm}, named \emph{Paradigm Dependencies} (PDs).
PDs tell that if a string matches the paradigm, element at the specified position can decides a certain other attribute's value.
We propose a framework in which strings with similar coding rules and different lengths are clustered together and aligned vertically, from which PDs can be discovered directly.
The aligning problem is the key component of this framework and is proved in \emph{NP-Complete}.
A greedy algorithm is introduced in which the clustering and aligning tasks can be accomplished simultaneously.
Because of the greedy algorithm's high time complexity, several pruning strategies are proposed to reduce the running time.
In the experimental study, three real datasets as well as several synthetical datasets are employed to verify our methods' effectiveness and efficiency.
\end{abstract}

%
\section{Introduction}
Statistics show that dirty data is becoming more inevitable and widespread\cite{Miller, Swartz}, which often causes serious consequences\cite{Ray, Otto} and is expensive to clean.
In recent years, the database communities have extensively investigated the problem of dealing with dirty data.
Inconsistency and incompleteness are two important aspects of dirty data.
A database is inconsistent if it violates some data quality rules such as functional dependencies\cite{Codd}, conditional functional dependencies\cite{Fan,Bohannon}, extended conditional functional dependencies\cite{Bravo} and editing rules\cite{Fan2}, fixing rules\cite{Wang}, etc.
These rules are also helpful in imputing missing values in an incomplete database.

All these rules exploit relationships between entire attributes, just as defined in relational database where attributes are non-splittable.
In many actual applications however, part of an attribute value (especially one with string-type) contains information useful in dealing with incomplete and inconsistent data:
manufacturer of a product may name it by the specifications, an ISBN number contains information of the press, DOI number of a paper indicates its publishing information such as the organization, the volume and issue number, etc.

In this paper, we consider of discovering such type of dependencies (\emph{Paradigm Dependencies}, PDs) from existing datasets.
Take a motivating example, in a online shopping website, such as \emph{eBay}, \emph{Amazon} and \emph{Rakuten}, etc., kinds of products with their specifications are listed in the demonstrating pages.
However, the specification data may contain errors or even be incomplete.
Error information may mislead the customers to buy goods they dislike.
Commodities with incomplete information may be neglected in searching.
For instance, a customer is looking for computers with memory size equals $16GB$, a product meeting the requirement will become invisible to the potential buyer if its memory size value is unknown.
To avoid these undesirable things from happening, PDs say that part of the id (or type identifier, serial number etc.) of a product can help finding its correct specification values.
We employee a real world example to illustrate the feasibility:
\begin{example}
\label{exmp1}
\emph{SL410}, \emph{T520i} and \emph{T560} are three notebook types of \emph{Thinkpad}, as shown in table \ref{tab1}.
They are not named casually, but have followed some laws: the starting letter stands for its serial, the following numeric digit for \emph{Screen Size} (\emph{4} for \emph{14 inch} and \emph{5} for \emph{15 inch}), and the second digits of \emph{T520i} and \emph{T560} show that they are firstly sold in \emph{2012} and \emph{2016} respectively.

If the \emph{Screen Size} value of \emph{T560} is unfortunately lost, according to the rule 'The first numeric digit decides the screen size value' along with \emph{T520i}'s information, it is easy to derive the missing value, while in the traditional dependencies it is not the case.
\end{example}

\begin{table}[!hbp]
\centering
\begin{tabular}{|c|c|c|c|c|}
\hline
\underline{Type} & Aligned Type & Year & Screen Size & ...\\
\hline
SL410 & SL410\_ & 2010 & 14 inch & ... \\
T520i & T\_520i & 2012 & 15 inch & ... \\
T560 & T\_560\_ & 2016 & 15 inch & ... \\
\hline
New S1 2016& New S1 2016 & 2016 & 12 inch & ... \\
New S1 2017& New S1 2017 & 2017 & 13 inch & ... \\
\hline
... & ... & ... & ... & ... \\
\hline
\end{tabular}
\caption{Several notebook types of \emph{Thinkpad}}
\label{tab1}
\end{table}

From real world applications, several observations can be found in the naming laws:
\begin{itemize}
\item The string values are of different lengths, and digits with the same meaning between different string may occur in different positions.
\item The meaningful digits are often ordered, for example, the digits standing for screen size always appear before those for year.
\item Digits with similar meanings are also similar to each other.
In the running example, the first numeric digit stands for screen size. \emph{4} and \emph{5} are not far from each other in the alphabet.
\item Not all tuples obey the same naming law, e.g., \emph{New S1 2016} and \emph{New S1 2017} are named in a different way from the other three.
\end{itemize}

It can be found that there are indeed available information in part of a string-type attribute.
Meaningful naming laws exist in many areas in different ways, making it hard to obtain manually, and discovering rules automatically is of great significance.

A intuitive solution is \textbf{clustering} and \textbf{aligning} the strings first, then \textbf{detecting} dependencies in each clustered group.
By clustering, strings under similar naming laws should be clustered together, and by aligning, characters with similar meaning should be aligned to the same column.
In table \ref{tab1}, for instance, the first three types may be clustered together and aligned by inserting $null$ values (presented by underlines in column \emph{Aligned Type}).
Dependency relationships between these aligned positions and other attributes can be found and expressed by star-free regular expressions in the detecting step, where several measures are needed, such as \emph{support} and \emph{confidence}, etc., similar to those in \cite{Fei}.

Contributions of this paper include:
\begin{itemize}
\item A framework of finding paradigm dependencies is introduced.
\item The problem of string aligning is analysed and proven in NP-Complete.
\item A greedy algorithm is proposed and optimized, by which the aligning and clustering tasks are combined seamlessly.
\item Experiments are conducted on both real and synthetical datasets to verify our methods' performance.
\end{itemize}

In the rest of this paper, related work is summarized in section \ref{relatedwork}.
In section \ref{clusteringalgnment}, the problems of clustering and aligning, are studied in detail.
In section \ref{finding} we discussed how to find dependencies using the clustered and aligned attributes.
In section \ref{experiments} are our experimental results and the paper is concluded in \ref{conclusion}.

\section{Related Work}
\label{relatedwork}
\subsection{Data Quality Rules}
Arenas et al. proposed the concept of consistent query answer with respect to violation of traditional FDs, which identify permanent answers over all possible repairs of a dirty database\cite{Arenas}.
In \cite{Bohannon}, Bohannon et al. introduced CFDs by introducing embedded values into FDs, which have stronger expressing ability.
Matching dependencies \cite{Fan3} were introduced to expressing an expert's knowledge in improving data quality.
MDs declare that if two tuples are similar enough on some attributes, they should equal on a certain other attribute.
Fan et al. proposed editing rules in \cite{Fan4}, which can find certain fixes based on some already known clean information.
Given some certain information, editing rules can tell which attributes to fix and how to update them.
In \cite{Jiannan}, fixing rules were proposed.
A fixing rule contains an evidence pattern, a set of negative patterns and a fact value.
When a tuple fits some negative patterns, the corresponding fixing rule can capture the wrong values and knows how to fix them.
For dirty timestamps, temporal constraints were introduced in \cite{Shaoxu}, which declare the distance restrictions between timestamps.
Guided by the constraints, the problem of repairing inconsistent timestamps was studied.
In \cite{Zeyu}, regular expressions (REs) can be used to repair dirty data, making them obeying the given REs with minimal revision cost.
\subsection{Rules Discovery}
There are numerous of algorithms proposed for discover FDs, including Tane\cite{Huhtala}, DepMiner\cite{Lopes}, FastFDs\cite{Wyss}.
Both Tane and DepMiner search the attribute lattice in a levelwise manner, and can directly obtain minimal FD cover.
On the other hand, FastFDs search FDs in a greedy heuristic and depth-first manner, and may lead to non-minimal FDs, a further checking progress is required.

In \cite{Fei}, the problem of discovering CFDs was studied, algorithms for discovering minimal CFDs and dirty values in a data instance were presented.
Fan et al. proposed algorithms discovering CFDs analog to those for FDs, i.e., CFDMiner for discovering constant CFDs, and Ctane and FastCFD for ordinary ones in \cite{Fan5}.

Besides discovery of FDs and CFDs, there have been researches focus on discovering other rules.
Discovery of \emph{Denial Constraints} (DCs) can be found in \cite{Xu}, an efficient instance-driven discovery algorithm (extended from FastFD) was developed based on a set of inference rules.
For temporal rules used in web data cleaning, \cite{Ziawasch} used machine learning methods such as association measures and outlier detection to handle the discovery problem, which is robust to noise.
\section{Clustering and Aligning}
\label{clusteringalgnment}
In this section, we introduce the clustering and aligning problems, and then analyze the difficulties in solving them.
Table \ref{tab_symbols} summarizes the most important symbols used in the rest sections of the paper.

\begin{table}[!hbp]
\centering
\begin{tabular}{|c|c|}
\hline
Notation & Description\\
\hline
$d(\cdot)$ & Distance Function \\
$D(C)$ & Diameter of a set of elements \\
$length(P)$ & Length of strings in paradigm $P$ \\
$card(P)$ & Number of strings in $P$ \\
$P[i]$ & The set of characters at the $i$th column of $P$ \\
$size(P)$ & Sum of $D(P[i])$ for all location $i$s \\
$P_1\uplus P_2$ & A new paradigm created by \\
 & aligning (or merging) $P_1$ and $P_2$ \\
$lb(P_1,P_2)$ & Lower bound of $size(P_1\uplus P_2)$ \\
$ub(P_1,P_2)$ & Upper bound of $size(P_1\uplus P_2)$ \\
$I(P_1,P_2)$ & Bound Interval of $size(P_1\uplus P_2)$ \\
$\mathcal{CR}$ & The set of critical intervals \\
$\mathcal{P(CR)}$ & Paradigms involved in $\mathcal{CR}$ \\
\hline
\end{tabular}
\caption{Several Important Frequently Used Symbols}
\label{tab_symbols}
\end{table}

\subsection{Clustering}
In a good clustering method, two properties are desired:
\begin{itemize}
\item Strings with similar naming law should be clustered together.
\item Strings with different name laws should be clustered into different groups.
\end{itemize}
For the first property, if strings with similar naming law are separated, the number of strings in a group will be reduced, which means low supports.
On the other hand, if strings with different naming laws are grouped together, existing dependencies would be discarded due to underestimated confidences.
All these undesired conditions will prevent true dependencies from being discovered.

In a classical clustering, one or more parameters are often required to control the number or sizes of clusters.
Unfortunately, in our situation, it is hard to give such a parameter.
In addition, simply grouping strings into different sets are insufficient.
For example, if dependencies $\varphi_1$ and $\varphi_2$ can be discovered from groups $G_1$ and $G_2$ respectively, it is possible that by merging $G_1$ and $G_2$ into $G_3$, a new dependency $\varphi_3$ can be detected.

In this paper, we consider of an adaptive method similar to hierarchical clustering \cite{Guha,Zhang,George}.
In our proposed framework, clustering parameters are not required and groups like $G_1,G_2$ and $G_3$ are all maintained.
\subsection{Aligning}
Let $\mathcal{S}$ be a set of strings over character set $\mathcal{C}$.
Given a string $s$, $length(s)$ is the number of characters (or elements) in $s$, and the $i$th character in $s$ is denoted by $s[i]$, where $1\le i\le length(s)$.

Before the problem definition, we introduce several notations.
A distance function over $\mathcal{C}$ is:
$$d:\mathcal{C}\times\mathcal{C}\rightarrow[0,+\infty]$$
In the rest of this paper, $d$ is metric by default, if not specified.

\begin{definition}[Diameter of Character Sets]
For a set $C\subseteq\mathcal{C}\cup\{null\}$ of characters, the diameter of $C$ is the largest distance between every pair of elements in $C$:
$$D(C)=\max\limits_{c_1, c_2\in C}d(c_1,c_2)$$
\end{definition}
Over '$\cup$', the diameter satisfies monotonicity and triangle inequality:
\begin{lemma}
For two arbitrary characters $C$ and $C'$, if $C\subseteq C'$ $D(C)\le D(C')$ always holds. \emph{(Proof omitted.)}
\end{lemma}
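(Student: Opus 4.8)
The plan is to prove this monotonicity property directly from the definition of diameter as a maximum over pairs. The statement claims that if $C \subseteq C'$ (reading the typo ``characters'' as ``character sets''), then $D(C) \le D(C')$.

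First I would unfold the definition: $D(C) = \max_{c_1, c_2 \in C} d(c_1, c_2)$ and $D(C') = \max_{c_1, c_2 \in C'} d(c_1, c_2)$. Let $a, b \in C$ be a pair achieving the maximum for $C$, so that $D(C) = d(a,b)$. Since $C \subseteq C'$, we have $a, b \in C'$ as well, hence the pair $(a,b)$ is one of the pairs over which the maximum defining $D(C')$ is taken. Therefore $D(C') = \max_{c_1,c_2 \in C'} d(c_1,c_2) \ge d(a,b) = D(C)$, which is exactly the claim. One should also note the edge case where $C$ is empty or a singleton: then $D(C) = 0$ (or is defined as $0$ by convention, since $d(c,c)=0$ for a metric), and since $d \ge 0$ everywhere, $D(C') \ge 0 = D(C)$ still holds.

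The argument is essentially a one-line observation that the maximum of a function over a larger index set is no smaller, so there is no real obstacle here. The only thing to be a little careful about is making the empty/singleton convention explicit so the inequality is vacuously or trivially true in those degenerate cases; everything else is immediate from $C \subseteq C'$ enlarging the set of admissible pairs. This is why the paper is content to omit the proof — it follows directly from the definition and the non-negativity of $d$.
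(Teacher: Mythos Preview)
Your proof is correct and is exactly the standard one-line argument the paper has in mind: the pairs $(c_1,c_2)$ with $c_1,c_2\in C$ form a subset of the pairs with $c_1,c_2\in C'$, so the maximum defining $D(C')$ is taken over a superset and hence is no smaller. The paper omits the proof for precisely this reason, so there is nothing to compare.
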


\begin{lemma}
For arbitrary three character sets $C_1$, $C_2$ and $C_3$, the following equations holds:
\begin{equation}
\label{eq_set_tria}
D(C_1\cup C_2 \cup C_3)\le D(C_1\cup C_2) + D(C_2\cup C_3)
\end{equation}
\end{lemma}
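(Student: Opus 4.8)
The plan is to exploit the fact that $C_2$ is shared by the two sets $C_1\cup C_2$ and $C_2\cup C_3$ on the right-hand side, so that any element of $C_2$ can serve as a pivot for the triangle inequality of the metric $d$. First I would let $a,b\in C_1\cup C_2\cup C_3$ be a pair attaining the diameter, i.e. $D(C_1\cup C_2\cup C_3)=d(a,b)$ (the maximum is attained since the character sets are finite and nonempty). It then suffices to prove $d(a,b)\le D(C_1\cup C_2)+D(C_2\cup C_3)$.

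The argument proceeds by a short case analysis on where $a$ and $b$ live. If both endpoints already lie in $C_1\cup C_2$, then $d(a,b)\le D(C_1\cup C_2)$, and since distances are nonnegative we have $D(C_2\cup C_3)\ge 0$, so the bound follows; the case where both lie in $C_2\cup C_3$ is symmetric. Otherwise the two failures force one endpoint to lie in $C_1$ and outside $C_2\cup C_3$, while the other lies in $C_3$ and outside $C_1\cup C_2$; say $a\in C_1$ and $b\in C_3$. Here I pick any $c\in C_2$ and apply the triangle inequality,
$$d(a,b)\le d(a,c)+d(c,b).$$
Because $a,c\in C_1\cup C_2$ we have $d(a,c)\le D(C_1\cup C_2)$, and because $c,b\in C_2\cup C_3$ we have $d(c,b)\le D(C_2\cup C_3)$; adding these gives the claim.

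The point requiring the most care is the choice of pivot $c$, which presupposes $C_2\neq\emptyset$: were $C_2$ empty the inequality could fail, e.g. taking $C_1$ and $C_3$ to be far-apart singletons makes the left side large while both right-hand diameters vanish. In the paper's setting each argument of $D$ is a genuine column $P[i]$ of a paradigm and hence nonempty, so this hypothesis is automatically satisfied; alternatively one may state it explicitly. The argument uses only that $d$ is a metric, which the excerpt fixes as the default, so no further properties of $d$ are needed.
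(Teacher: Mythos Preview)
Your proof is correct and follows essentially the same approach as the paper: pick a pair $a,b$ realizing $D(C_1\cup C_2\cup C_3)$, do a case split on whether both lie in $C_1\cup C_2$, both in $C_2\cup C_3$, or one in $C_1$ and the other in $C_3$, and in the last case use any $c\in C_2$ as a pivot for the triangle inequality of $d$. Your explicit remark that the argument needs $C_2\neq\emptyset$ (which the paper's proof silently assumes when it writes ``for any character $c'\in C_2$'') is a worthwhile clarification.
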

\begin{proof}
Let $c_1$ and $c_2$ be the pair of farthest characters in $C_1\cup C_2\cup C_3$.
We prove the inequality in all possible cases:
\begin{itemize}
\item If $c_1,c_2\in C_1\cup C_2$, then $D(C_1\cup C_2 \cup C_3) = D(C_1\cup C_2) \le D(C_1\cup C_2) + D(C_2\cup C_3)$.
\item If $c_1,c_2\in C_2\cup C_3$, the proof is similar.
\item In the rest case, $c_1$ and $c_2$ must come from $C_1$ and $C_3$ respectively, without of loss of generality, assume $c_1\in C_1$ and $c_2\in C_3$, for any character $c'\in C_2$:
$D(C_1\cup C_2)\ge d(c_1,c')$ and $D(C_2\cup C_3)\ge d(c',c_2)$.
Meanwhile, because $d$ is a metric, we have $d(c_1,c')+d(c',c_2)\ge d(c_1,c_2)=D(C_1\cup C_2\cup C_3)$, indicating Eq. \ref{eq_set_tria}.
\end{itemize}
\end{proof}

\begin{definition}[Paradigm]
A paradigm $P$ is a set of strings with equal length.
The common length of strings in $P$ is also called $P$'s length, denoted as $length(P)$, and cardinality of $P$ is defined as the number of strings in $P$, denoted as $card(P)$.
The size of $P$ is defined by:
$$size(P)=\sum\limits_{i=1}^{length(P)}D(P[i])$$
where $P[i]$ is the set of characters in the $i$th column:$P[i]=\{s[i]|s\in P\}$.
\end{definition}

Now we are ready to define the aligning problem formally.
\begin{definition}[String Aligning Problem (SAP)]
Given a set of strings $\mathcal{S}=\{s_1,s_2,...,s_N\}$, a distance function $d$ and a threshold $t$, find out a paradigm $P=\{s_1'.s_2',...,s_N'\}$, where $s_i'$ is obtained from $s_i$ by inserting $null$ values, such that $size(P)\le t$.
\end{definition}

The intuition behind is that similar characters will be aligned into a single column with high probability if the distance function is well designed.
Only insertions are allowed because delete or change a character will cause loss of information, which is undesirable.
The intuition of using $D(P[i])$ to evaluate the size of a paradigm is that, smaller diameter means higher similariry degree.
With $size(P)$ minimized, similar characters would be aligned together with high possibility.
Meanwhile, number of distinct elements in $P[i]$ should not influence the size too much.
For example, if '$1$' and '$3$' is already in $P[i]$, '$2$' is natural to add into this set, and should not enlarge the size too much.
For a contrary example, $\sum\limits_{c_1,c_2\in C}d(c_1,c_2)$ does not meet the latter property.

Unfortunately, the \emph{SAP} is intractable.
\begin{theorem}
\emph{SAP} is in \emph{NP-Complete}, even if $d$ is a metric.
\end{theorem}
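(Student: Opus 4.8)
The plan is to prove the two directions separately: membership in \emph{NP}, and then \emph{NP}-hardness by reduction from a known hard alignment problem. For membership, the natural certificate is the paradigm $P$ itself, encoded as the list of positions at which $null$s are inserted into each $s_i$. The only danger is that the certificate could be super-polynomially long, i.e. that $length(P)$ might be forced large; I would rule this out by noting that an all-$null$ column has diameter $d(null,null)=0$, so deleting it leaves every $s_i'$ a valid null-insertion of $s_i$ and does not increase $size(P)$. Hence a feasible instance always admits a feasible $P$ in which every column carries at least one genuine character, so $length(P)\le\sum_i length(s_i)$, a polynomial bound. Since each $D(P[i])$, and therefore $size(P)$, is computable in polynomial time and the test $size(P)\le t$ is immediate, \emph{SAP}$\,\in\,$\emph{NP}.

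The hardness reduction is where the real work lies, and I would begin by fixing the distance function so that the column objective becomes purely combinatorial: set $d(c,c')=2$ for any two distinct genuine characters and $d(c,null)=1$ for every genuine $c$ (with $d(c,c)=0$). A short case check shows this is a metric — the only tight inequality is $d(c,c')=2=d(c,null)+d(null,c')$ — so the construction already honours the ``even if $d$ is a metric'' clause. Under this $d$ a column contributes $0$ when it is constant, $1$ when it holds a single genuine character padded by $null$s, and $2$ when it mixes two distinct genuine characters; thus $size(P)$ measures the total ``impurity'' of the columns, and minimizing it is exactly an alignment-by-consensus problem. I would then reduce from a canonical \emph{NP}-complete string problem of this flavour, such as \emph{Shortest Common Supersequence} (equivalently multiple-sequence alignment): map its strings to $\mathcal{S}$ unchanged, use the metric above, and choose $t$ to be the consensus value attained precisely when the supersequence bound is met. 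The forward implication is the clean one — a good common supersequence $w$ induces an alignment whose $j$-th column carries only the symbol $w_j$ (and $null$s), so its impurity is controlled by $|w|$ — while the reverse implication must exploit that mixed columns cost $2$ to force a cheap paradigm into a near-constant column structure from which the consensus bound can be recovered.

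The main obstacle is this exact two-way correspondence, and it stems directly from the fact that $D(\cdot)$ is a \emph{maximum} rather than a sum: $size(P)$ cannot ``count'' inserted $null$s or the total $length(P)$, only how diverse each column is. The obvious device of penalizing gaps directly is therefore unavailable, and, more sharply, any attempt to make gaps expensive relative to genuine mismatches collides with the triangle inequality, since $d(c,c')\le d(c,null)+d(null,c')$ caps how far apart two real characters can be pushed once $null$ sits between them. Threading the encoding through this constraint — so that the source problem's objective appears faithfully as column impurity and the threshold $t$ cleanly separates yes- from no-instances in both directions — is the delicate step; by comparison the membership argument and the metric verification are routine.
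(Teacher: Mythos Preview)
Your membership argument is fine (and in fact more careful than needed: you correctly bound $length(P)$ by pruning all-$null$ columns). Your choice of source problem (Shortest Common Supersequence) and of the $0/1/2$ metric is also exactly what the paper uses. The gap is that you have correctly diagnosed the obstacle but not supplied the idea that removes it.

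Concretely: with the strings mapped ``unchanged'' and your metric, $size(P)$ does \emph{not} track the supersequence length. Given a common supersequence $w$ of length $k$, the induced paradigm has each column equal to $\{w_j\}$ or $\{w_j,null\}$, so $size(P)$ equals the number of columns in which at least one string is padded---a quantity that can range anywhere from $0$ to $k$ depending on the instance, with no clean relation to $k$. Conversely, a paradigm with small $size$ may have many pure (diameter-$0$) columns and hence be very long, again giving no control on the supersequence length. You cannot repair this by making $null$ expensive, for exactly the triangle-inequality reason you state. So as written there is no choice of $t$ that makes the reduction two-sided.

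The missing device is a single gadget string. Adjoin to $S_{SCS}$ one extra string $s_n=c_n^{\,k}$ consisting of $k$ copies of a \emph{fresh} symbol $c_n$, and set $d(c_n,\cdot)=1$ against both $null$ and every ordinary character (your metric already does the rest). Now any paradigm has at least $k$ columns carrying $c_n$, and since $c_n$ occurs in no other string each such column also contains either $null$ or an ordinary character, hence has diameter $\ge 1$. This forces $size(P)\ge k$ unconditionally, so with $t=k$ equality pins $length(P)=k$ and forbids two distinct ordinary characters in any column; reading off the unique ordinary character (or $c_n$) per column yields the length-$k$ supersequence. The forward direction is the easy one you already sketched. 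The whole trick is that the fresh-symbol string converts ``length $\le k$'' into ``$k$ columns each of diameter exactly $1$'', which is something the $\max$-based $size$ \emph{can} count.
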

\begin{proof}
The lower and upper bounds of $length(P)$ is $\max\limits_{s\in\mathcal{S}}length(s)$ and $\sum\limits_{s\in\mathcal{S}}length(s)$ respectively.
To prove $SAP\in NP$, we just guess $length(P)$ and the positions to insert $null$s for each string in $\mathcal{S}$, and then verify whether the paradigm size is no more than $t$.
Obviously, this can be done by a nondeterministic algorithm in polynomial time, so $SAP\in NP$

The \emph{NP-Completeness} can be proven by reduction of the Shortest Common Super-sequence (SCS) problem\cite{Maier}.

The SCS problem is that for a set $S_{SCS}$ of strings, find out a string $s'$ with $len(s')\le k$, such that for any string $s\in S_{SCS}$, $s$ is a subsequence of $s'$.
Given such a instance of SCS, the construction of String Aligning Problem (SAP) need some tricks.
Let $S_{SAP}$ be the string set in SAP, $$S_{SAP}=S_{SCS}\cup\{s_n\}$$ where $s_n$ is a string constituted with $k$ identical characters $c_n$, which is a new character hasn't appeared in $S_{SCS}$.

The distance function $d$ is defined as follows:
\begin{equation*}
d(a,b)=
\begin{cases}
0 &\mbox{if $a = b$}\\
1 &\mbox{else if $a\ne b\wedge(a=null\vee b=null)$}\\
1 &\mbox{else if $a\ne b\wedge(a=c_n\vee b=c_n)$}\\
2 &\mbox{else}
\end{cases}
\end{equation*}

The threshold is $t = k$.
It is easy to verify that $d$ satisfies the triangular inequality, and the construction can be done in polynomial time.

Now we prove that there exists a common super-sequence $s'$ for the SCS Problem with $length(s')\le k$ if and only if there exists a paradigm $P$ for the SAP Problem with $size(P)\le t$.

\emph{If.}
Because $s_n$ is in the string set, and only insertions are allowed, so $length(P)\ge length(s_n)=k$.
There are exactly $k$ columns containing the character $c_n$, and everyone of them contains a character different from $c_n$ (i.e., $null$ or a ordinary one).
For such a single column, the diameter of the corresponding character set satisfies $D(P[i])\ge 1$ according to the definition of $d$.
Thus we have $$size(P)\ge D(P[i])\times length(P)\ge 1\times k = k$$
Along with the \emph{If} condition $size(P)\le t = k$, we have $size(P)=k$, which means $length(P)=k$ and there are no more than 2 different not-null characters at each column.
By removing the string $s_n$ from $P$, the rest strings can be directly merged into a super-sequence with length equals $k$.

\emph{Only If.}
If there exists a common super-sequence $s'$ for the SCS Problem with $length(s')\le k$, each string $s_i$ in the string set can be transformed into $s'$ by inserting characters into some places.
By inserting $null$s instead of specified characters, into these places, a new string $s_i'$ with length equals $k$ can be obtained.
The new strings along with $s_n$ formed the solution of the SAP Problem.
It is obvious that $length(P)=k$.
At each column, there are no two characters with distance larger than 1, so the diameter at each column is no more than 1 and $size(P)\le length(P)\times1=k$.
\end{proof}

Due to its intractability, we consider of greedy solutions.
The overall framework and algorithms are introduced in the rest of this section.

\subsection{Framework}
\label{framework}
Although the aligning problem is intractable when the number of strings is arbitrary, there exist determined algorithms giving the optimal resolution for a constant number of strings, by \emph{Dynamic Programming} (DP).
Actually, the problem of calculating \emph{Edit Distance} is a specialization of the aligning problem with two input strings, by specifying the distance between different characters as a constant value.
The algorithm is almost all the same, except for some difference in the recursive equation.
In our setting, the recursive equation becomes:
\begin{equation*}
size_{i,j}=min
\begin{cases}
size_{i-1,j}+d(s_1[i], null)\\
size_{i,j-1}+d(s_2[j], null)\\
size_{i-1,j-1}+d(s_1[i],s_2[j]))
\end{cases}
\end{equation*}
In calculating \emph{Edit Distance}, the distance function $d$ is replaced with the constant $1$.
In the rest of this paper, in situations of two strings, we use the word '\emph{merge}' instead of '\emph{align}' timely for sake of easy understanding.

Due to tractability of the two-string case and intractability of the general case, a straightforward idea is merging the most similar strings at each step ,i.e., in a greedy manner.
The intuition behind is that strings with similar naming law should be merged as early as possible.
By doing this, different characters with similar meanings will be aligned together with high possibilities.
For example, '$ab1c$' and '$ab9c$' are obviously obeying the same naming law.
If they are merged firstly, the three identical characters '$a$', '$b$' and '$c$' can help us align the different characters '$1$' and '$9$' together.

We say that the aligning result paradigm is \emph{merged} from the input strings.
Generally, merged paradigms can be further merged with other strings or paradigms in similar way.
Without loss of generality, a single string can be seen as a initial paradigm.
Then we can denote by $P_1\uplus P_2$ the merged paradigm from paradigms $P_1$ and $P_2$.
We call $P_1\uplus P_2$ the super-paradigm of $P_1$ (or $P_2$), and $P_1$ ($P_2$, resp.) is a sub-paradigm of $P_1\uplus P_2$.

There are two optional greedy algorithms:

In the first option, two strings in $\mathcal{S}$ are merged into a paradigm $P$, with $size(P)$ minimized.
Next, a string in $S$ is selected and merged into $P$, which has never been selected before and with $size(P)$ still minimized.
This process continues until all strings in $S$ are merged into $P$.
Because a single paradigm is being enlarged in the running time, we call it \emph{Single Merge}, as shown in algorithm \ref{alg_single_merge}.
\begin{algorithm}[h]
    \caption{Single Merge}
    \begin{algorithmic}[1]
    \label{alg_single_merge}
    \REQUIRE A set $\mathcal{S}$ of strings over charset $\mathcal{C}$ and a metric $d$ defined on $\mathcal{C}$.
    \ENSURE An aligning of $\mathcal{S}$ with size as small as possible.
    \STATE {$P \leftarrow \{s_1\}\uplus\{s_2\}$, where $s_1$ and $s_2$ are selected from $\mathcal{S}$ with $size(\{s_1\}\uplus\{s_2\})$ minimized.};
    \STATE {$\mathcal{S}\leftarrow\mathcal{S}-\{s_1,s_2\}.$}
    \WHILE {$|\mathcal{S}|>0$}
    \STATE {Select a string $s$ in $\mathcal{S}$ with $size(P\uplus\{s\})$ minimized.}
    \STATE {$P\leftarrow P\uplus\{s\}$.}
    \STATE {$\mathcal{S}\leftarrow\mathcal{S}-\{s\}.$}
    \ENDWHILE
    \RETURN $P$;
    \end{algorithmic}
\end{algorithm}

In the alternative option, each string in $\mathcal{S}$ is initialized as a single-string paradigm.
All initial paradigms are added into the set $\mathcal{P}$.
Then all the paradigms are merged iteratively in a pair-wise manner: at each step, a pair of paradigms $P_1$ and $P_2$ are selected and merged into a new one, with $size(P_1\uplus P_2)$ minimized.
After each merging operation, $P_1$ and $P_2$ are removed from $\mathcal{P}$ and $P_1\uplus P_2$ is added.
This process continues until a single paradigm is obtained in $\mathcal{P}$.
Because a pair of paradigms are merged at each step, we call it \emph{Pairwise Merge}, as shown in algorithm \ref{alg_pairwise_merge}.
\begin{algorithm}[h]
    \caption{Pairwise Merge}
    \begin{algorithmic}[1]
    \label{alg_pairwise_merge}
    \REQUIRE A set $\mathcal{S}$ of strings over charset $\mathcal{C}$ and a metric $d$ defined on $\mathcal{C}$.
    \ENSURE An aligning of $\mathcal{S}$ with size as small as possible.
    \STATE $\mathcal{P} \leftarrow \emptyset$;
    \FOR {each string $s\in\mathcal{S}$}
    \STATE Add paradigm $\{s\}$ into $\mathcal{P}$
    \ENDFOR
    \WHILE {$|\mathcal{P}|>1$}
    \STATE Find out two paradigms $P_1$ and $P_2$ in $\mathcal{P}$, with $size(P_1\uplus P_2)$ minimized.
    \STATE Add $P_1\uplus P_2$ into $\mathcal{P}$.
    \STATE Remove $P_1$ and $P_2$ from $\mathcal{P}$.
    \ENDWHILE
    \RETURN the single paradigm in $\mathcal{P}$;
    \end{algorithmic}
\end{algorithm}

The most time-consuming operation is '$\uplus$', which has a time complexity of $O(n^2\times M^2)$ where $n$ is the length of inputting paradigms, and $M$ is the charset size.
It can be shown that both \emph{Single Merge} and \emph{Pairwise Merge} require $O(N^2)$ times of merging, where $N$ is the number of strings $|\mathcal{S}|$.

\emph{Pairwise Merge} is more preferred in this paper, for it naturally fits a \emph{Hierarchical Clustering} algorithm's requirement without extra workloads:
Each paradigm can be seen as a group of similar strings, if two paradigms $P_1$ and $P_2$ are similar enough, they would be merged into a larger one $P$.

Another benefit is that, no clustering parameter is required in the pair-wise framework.
By maintaining $P_1$, $P_2$ and $P$ in a binary tree structure, discovery dependency becomes straightforward, which will be discussed in section \ref{finding}.

Due to the high algorithm complexity, we consider several pruning strategies to improve the efficiency, with necessary theoretical supports.

As discussed before, the most time-consuming operation in the pair-wise algorithm is merging of paradigms.
More over, merging is carried out for $O(N^2)$ times, even though only $N-1$ of them are actually performed to generate larger paradigms (in Line 6 of algorithm \ref{alg_pairwise_merge}) and all of the rest are just for finding out the pair with smallest size (Line 5).
So it is possible to reduce the number of useless merging work, and we propose two pruning techniques.

\subsection{Bound Based Pruning}
\label{bound_pruning}
Instead of calculating the exact size of merging each pair of paradigms, a size interval $[lb,ub]$ may requires much less computation.
We denote the lower and upper bound of $size(P_1\uplus P_2)$ by $lb(P_1,P_2)$ and $ub(P_1,P_2)$ respectively.
The corresponding interval is denoted by $I(P_1,P_2)$.

The basic idea is that when finding the pair of paradigms with the lowest merging size, we maintain a interval for each candidate pair.
If by $ub_{min}$ we denote the lowest upper bound of these intervals, all candidate pairs $P_1, P_2$ with no higher lower bounds than $ub_{min}$ can be safely pruned in the current iteration.
For all of the rest candidates, we call them critical intervals, denoted by $\mathcal{CR}$, tightening their bounds can help pruning more candidates.
The tightening process continues until a single candidate left in $\mathcal{CR}$, who is right the pair to be merged.
We call this the refining process, and illustrate it by example.
\begin{example}
\label{exmp2}
For $4$ candidate intervals $I(P_1,P_2)$, $I(P_3,P_4)$, $I(P_2,P_3)$ and $I(P_1,P_3)$ in Fig. \ref{fig1a}.
$I(P_1,P_2)$ has the lowest upper bound $ub_{min}$, $I(P_3,P_4)$ and $I(P_2,P_3)$'s lower bounds are lower than $ub_{min}$, it is hard to tell which pair of paradigms is of the lowest size corresponding to them.
On the other hand, $I(P_1,P_3)$'s lower bound is greater than $ub_{min}$, the candidate pair's merging size must not be the lowest, and can be safely pruned, presented by dotted line.
Thus the critical intervals $\mathcal{CR}$ contains $I(P_1,P_2)$, $I(P_3,P_4)$ and $I(P_2,P_3)$.

By refining intervals in $\mathcal{CR}$, as shown in Fig. \ref{fig1b}, $ub_{min}$ becomes lower and $I(P_3,P_4)$ and $I(P_2,P_3)$'s lower bounds becomes higher.
$I(P_2,P_3)$'s lower bound becomes higher than $ub_{min}$ and can be removed from $\mathcal{CR}$ and pruned.
By now only two intervals remain in CR, i.e., $I(P_1,P_2)$ and $I(P_3,P_4)$.
By a further refining, $ub_{min}$ becomes even lower and $I(P_3,P_4)$ is pruned, $I(P_1,P_2)$ is identified to be the pair with minimal merging size, as shown in Fig. \ref{fig1c}.
\end{example}

\begin{figure}
  \centering
  \subfigure[]{
    \label{fig1a} 
    \includegraphics[width=0.4\textwidth]{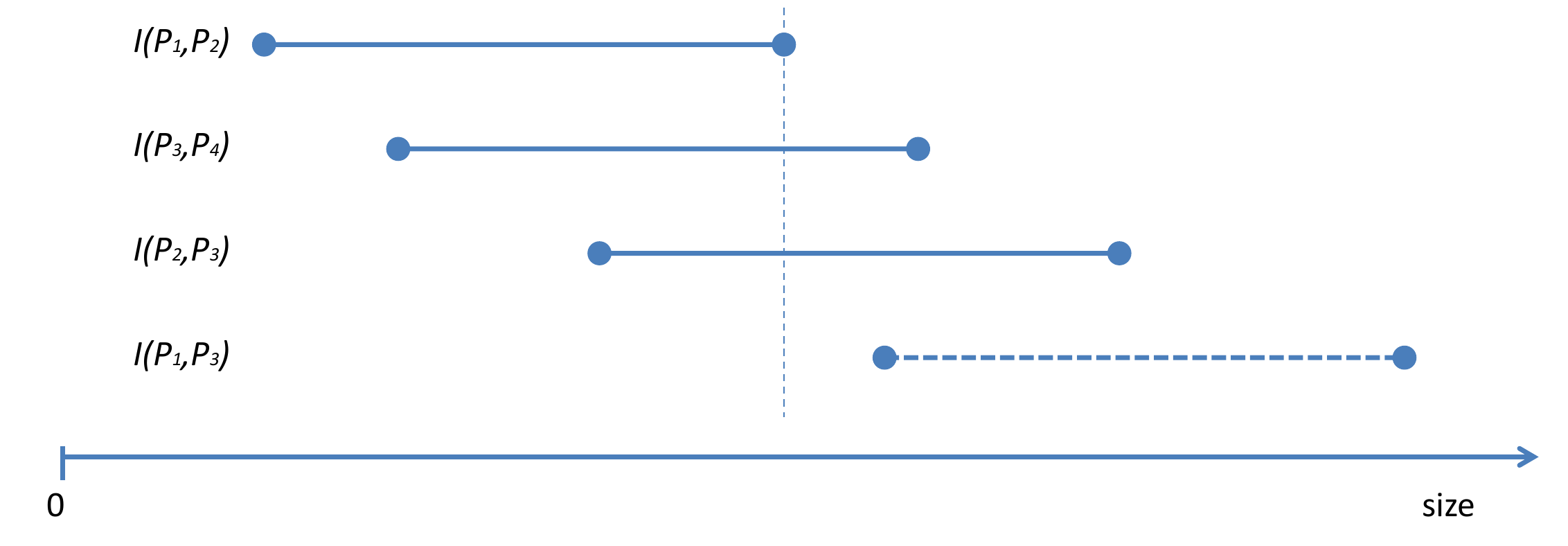}}
  \subfigure[]{
    \label{fig1b} 
    \includegraphics[width=0.4\textwidth]{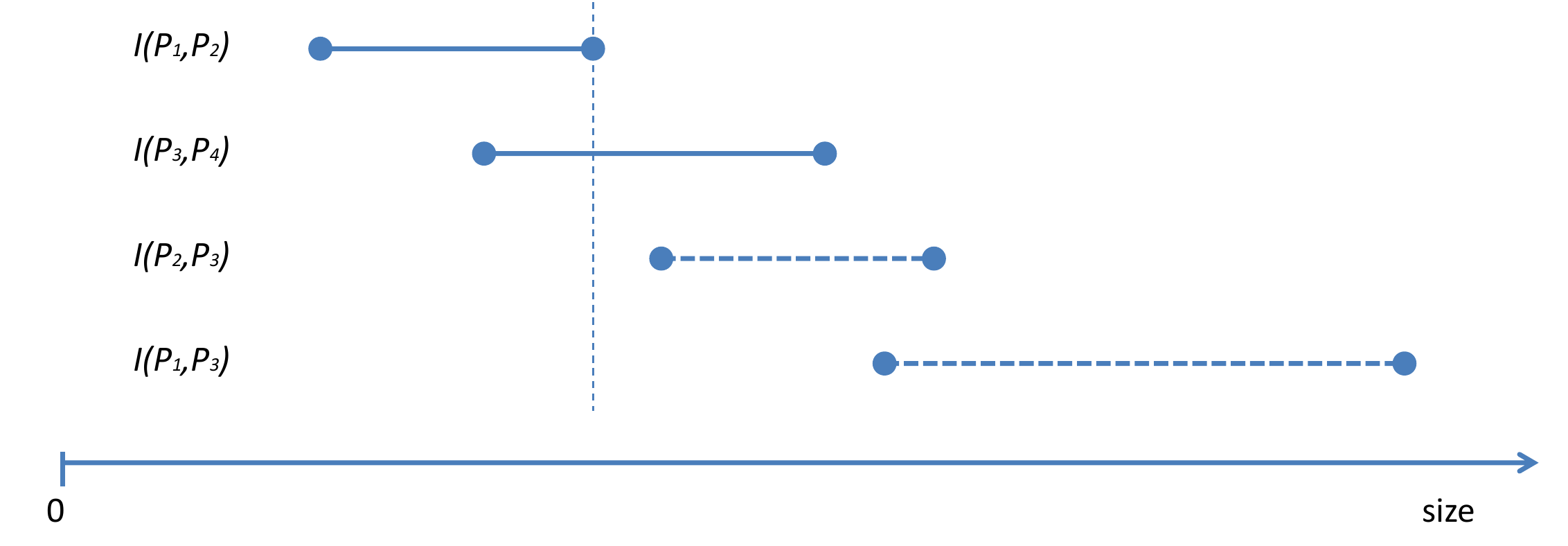}}
  \subfigure[]{
    \label{fig1c} 
    \includegraphics[width=0.4\textwidth]{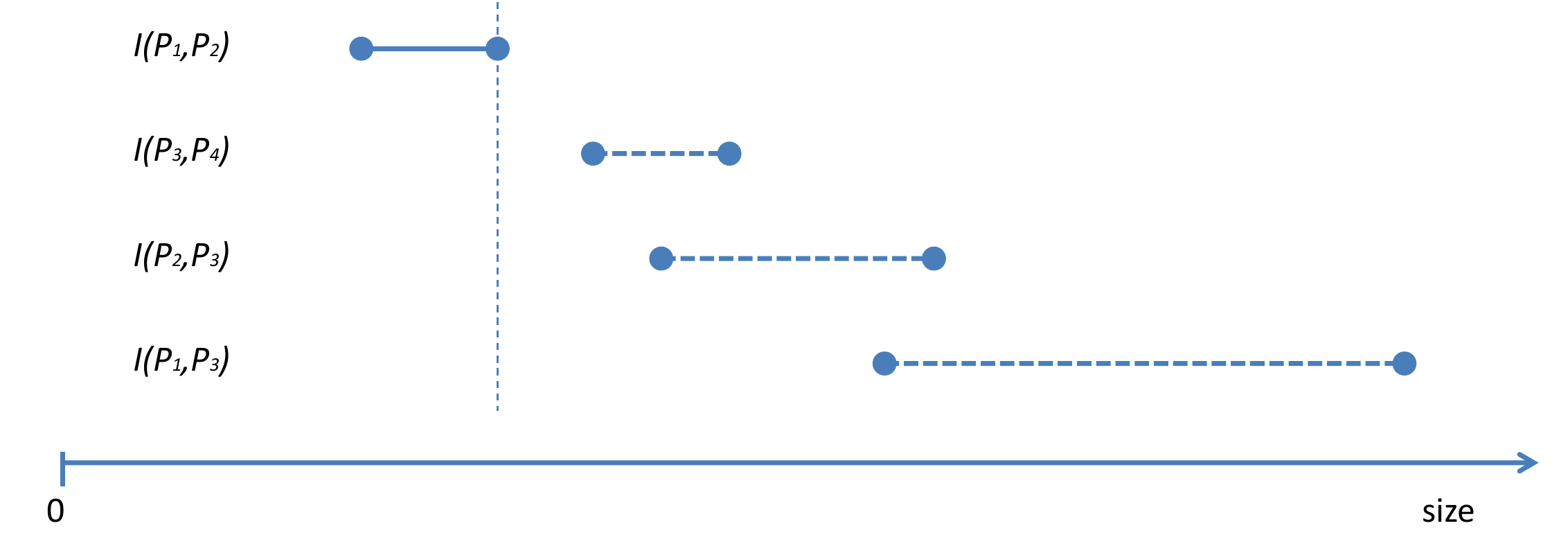}}
  \caption{An example of interval refining}
\label{fig1}
\end{figure}

Before describing the pruning based algorithm, we introduce several necessary inequalities, and discuss how to evaluating the bounds using to them.

\noindent{\textbf{Monotonicity}}

Size of a paradigm is monotonic.
\begin{lemma}
Merging a paradigm $P'$ into $P$ doesn't decrease its size:
\begin{equation}
\label{eq_par_mono1}
size(P\uplus P')\ge size(P)
\end{equation}
\end{lemma}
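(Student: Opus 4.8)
The plan is to exploit the fact that $P$ survives as a sub-paradigm inside $P\uplus P'$, combined with the monotonicity of the column diameter established earlier. The key observation is that $\uplus$ is an operation on \emph{columns} of paradigms: when $P$ is aligned against $P'$, nulls are inserted at the same positions into every string of $P$ simultaneously. Consequently the padded copy of $P$ sitting inside $P\uplus P'$ is again a paradigm, and each original column of $P$ is embedded, as a whole, into some column of the merged paradigm.

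First I would make this embedding precise. Writing $Q=P\uplus P'$, the uniform insertion of nulls into the strings of $P$ induces a strictly increasing map $\sigma:\{1,\dots,length(P)\}\to\{1,\dots,length(Q)\}$ sending the $i$th column of $P$ to the column of $Q$ into which it is placed. Because every string of $P$ receives nulls at identical positions, the $i$th characters of all strings of $P$ land together in column $\sigma(i)$ of $Q$; hence the characters contributed to that column by $P$ are exactly $P[i]$, possibly augmented by further characters (or nulls) coming from the strings of $P'$. This yields the set inclusion $P[i]\subseteq Q[\sigma(i)]$ for every $i$.

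Next I would apply the diameter monotonicity lemma (if $C\subseteq C'$ then $D(C)\le D(C')$) column by column to get $D(P[i])\le D(Q[\sigma(i)])$. Summing over $i$ and using that $\sigma$ is injective together with the non-negativity of every diameter (distances take values in $[0,+\infty]$, so diameters are non-negative), the columns of $Q$ outside the image of $\sigma$ may be discarded without increasing the sum:
\begin{equation*}
size(P)=\sum_{i=1}^{length(P)}D(P[i])\le\sum_{i=1}^{length(P)}D(Q[\sigma(i)])\le\sum_{j=1}^{length(Q)}D(Q[j])=size(Q).
\end{equation*}

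I do not expect a genuinely difficult step here. The only point that deserves an explicit argument rather than a bare assertion is the claim that merging inserts nulls at identical positions across all strings of $P$, so that $P$ persists inside $P\uplus P'$ as a padded paradigm instead of being realigned string by string. This is guaranteed by the definition of $\uplus$ as a column-wise merge of paradigms, but it is the hinge of the whole argument, since everything rests on the inclusion $P[i]\subseteq Q[\sigma(i)]$ it provides; once that is stated, the rest is immediate from monotonicity and non-negativity.
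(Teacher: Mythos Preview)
Your argument is correct. The paper itself does not supply a proof for this lemma: it simply states that Lemmas~3 and~4 ``are obvious and straightforward to prove, thus are omitted here.'' What you have written is precisely the natural way to cash out that ``obvious'' claim---the column-wise nature of $\uplus$ gives the injection $\sigma$ and the inclusion $P[i]\subseteq Q[\sigma(i)]$, and then Lemma~1 (diameter monotonicity) together with non-negativity finishes it. Your explicit remark that the merge inserts nulls uniformly across all strings of $P$, so that columns of $P$ are preserved as subsets of columns of $Q$, is exactly the point the paper is taking for granted; it is consistent with how the paper describes the DP-based merge (columns of a paradigm are treated as atomic units in the recursion). Nothing is missing.
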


More over, The size of merging paradigm $P$ with a super-paradigm is no less than that of merging with a sub-paradigm.
\begin{lemma}
\begin{equation}
\label{eq_par_mono2}
size((P_1\uplus P_2)\uplus P_3)\ge size(P_1\uplus P_3)
\end{equation}
\end{lemma}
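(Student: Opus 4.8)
The plan is to exploit the optimality that is built into the operator $\uplus$ together with the subset-monotonicity of the diameter. The crucial observation is that the right-hand side $size(P_1\uplus P_3)$ is, by definition, the \emph{minimum} paradigm size over all null-insertion alignments of the strings in $P_1\cup P_3$. Hence it suffices to exhibit one particular alignment of $P_1$ and $P_3$ whose size is bounded above by $size((P_1\uplus P_2)\uplus P_3)$; the minimality then gives the inequality for free.

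First I would fix $Q=(P_1\uplus P_2)\uplus P_3$, the optimal merged paradigm appearing on the left-hand side. By construction every string of $Q$ is an original string of $P_1$, $P_2$ or $P_3$ with $null$s inserted, and all strings of $Q$ share the common length $length(Q)$. Next I would \emph{project} $Q$ onto the rows that originate from $P_1$ and $P_3$ only, discarding the rows that came from $P_2$ while keeping the column structure intact; call the resulting paradigm $Q'$. Each string of $Q'$ is still an original $P_1$- or $P_3$-string with $null$s inserted, and they all retain the common length, so $Q'$ is a legitimate alignment of $P_1\cup P_3$ (after optionally deleting any columns that have become all-$null$, which carry diameter $D(\{null\})=0$ and therefore contribute nothing to the size).

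The core estimate is column-wise. For every column $i$ we have $Q'[i]\subseteq Q[i]$, since deleting rows can only remove characters from a column. By the monotonicity of the diameter under set inclusion (the first lemma of this section), $D(Q'[i])\le D(Q[i])$ for each $i$, and therefore
$$size(Q')=\sum_i D(Q'[i])\le\sum_i D(Q[i])=size(Q)=size((P_1\uplus P_2)\uplus P_3).$$
Finally, because $\uplus$ is defined so as to minimize the paradigm size over all null-insertion alignments, and $Q'$ is one such alignment of $P_1$ and $P_3$, we obtain $size(P_1\uplus P_3)\le size(Q')$. Chaining the two inequalities yields Eq. \ref{eq_par_mono2}.

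I expect the only delicate point to be arguing that $Q'$ genuinely qualifies as an admissible alignment of $P_1$ and $P_3$: one must verify that dropping the $P_2$-rows, and possibly collapsing the newly all-$null$ columns, still leaves every surviving string expressible as an original $P_1$- or $P_3$-string with $null$s inserted, so that the minimality defining $P_1\uplus P_3$ is actually applicable to $Q'$. Once this bookkeeping is settled, the column-wise diameter comparison and the resulting size bound are routine consequences of set monotonicity.
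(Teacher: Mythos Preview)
Your argument is correct and is precisely the natural route: project the optimal alignment $(P_1\uplus P_2)\uplus P_3$ onto the $P_1$- and $P_3$-rows, apply diameter monotonicity column-wise, and invoke the minimality built into $\uplus$. The paper itself omits the proof of this lemma, declaring it ``obvious and straightforward,'' so there is nothing to compare against; the bookkeeping worry you raise is harmless because $\uplus$ only ever inserts whole $null$-columns into each operand paradigm, so the surviving $P_1$- and $P_3$-rows of $Q$ automatically constitute an admissible candidate alignment for $P_1\uplus P_3$.
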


These two lemmas are obvious and straightforward to prove, thus are omitted here.

\noindent{\textbf{Triangular Inequality}}
\begin{theorem}[Triangular Inequality]
\label{thm2}
The paradigm size satisfies the triangular inequalities under merging operations if $d$ is a metric, that is, for three arbitrary paradigms $P_1$, $P_2$ and $P_3$, the following equations always hold:
\begin{equation}
\label{eq_para_tria1}
size((P_1\uplus P_2)\uplus P_3) \le size(P_1\uplus P_3) + size(P_1\uplus P_2)
\end{equation}

\begin{equation}
\label{eq_para_tria2}
size(P_1\uplus P_2) + size(P_2\uplus P_3)\ge size(P_1\uplus P_3)
\end{equation}

\begin{equation}
\label{eq_para_tria3}
|size(P_1\uplus P_2) - size(P_2\uplus P_3)|\le size(P_1\uplus P_3)
\end{equation}
\end{theorem}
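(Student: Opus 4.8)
The plan is to derive all three inequalities from the diameter triangle inequality~(\ref{eq_set_tria}) together with the monotonicity of $D$ under set inclusion, lifting these column-wise estimates to the paradigm level by exhibiting an explicit alignment and accounting for its size term by term. The only properties of $\uplus$ I will invoke are that $size(P_i\uplus P_j)$ equals the minimum, over all ways of inserting $null$-columns, of the total column diameter; that $\uplus$ is symmetric; and that a column whose entries are all $null$ contributes diameter $0$.

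For~(\ref{eq_para_tria2}) I would fix an optimal alignment $A_{12}$ of $P_1,P_2$ and an optimal alignment $A_{23}$ of $P_2,P_3$, and superimpose them along their common $P_2$ columns into one three-way alignment of $P_1,P_2,P_3$; each resulting column carries a (possibly $null$) block $A$ from $P_1$, $B$ from $P_2$ and $C$ from $P_3$. The superposition is arranged so that the columns contributed by $P_3$-only insertions are $null$ in both $P_1$ and $P_2$ (hence contribute $0$ to the $P_1,P_2$ projection), and symmetrically for $P_1$-only insertions, so that $\sum_{\mathrm{cols}}D(A\cup B)=size(P_1\uplus P_2)$ and $\sum_{\mathrm{cols}}D(B\cup C)=size(P_2\uplus P_3)$ exactly. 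Discarding the $P_2$ block yields a legal alignment of $P_1,P_3$ of size $\sum_{\mathrm{cols}}D(A\cup C)$. Since $A\cup C\subseteq A\cup B\cup C$, monotonicity and~(\ref{eq_set_tria}) give $D(A\cup C)\le D(A\cup B\cup C)\le D(A\cup B)+D(B\cup C)$ in every column; summing and using the minimality of $size(P_1\uplus P_3)$ over all alignments proves~(\ref{eq_para_tria2}). Inequality~(\ref{eq_para_tria3}) is then immediate: applying~(\ref{eq_para_tria2}) once with $P_3$ as pivot and once with $P_1$ as pivot, and using symmetry of $\uplus$, bounds both $size(P_1\uplus P_2)-size(P_2\uplus P_3)$ and its negative by $size(P_1\uplus P_3)$.

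For~(\ref{eq_para_tria1}) I would keep the inner alignment $Q:=P_1\uplus P_2$ frozen and construct a particular alignment of $Q$ with $P_3$ out of an optimal alignment $A_{13}$ of $P_1$ with $P_3$. Each column of $Q$ splits into a $P_1$-block $a_c$ and a $P_2$-block $b_c$, one of which is $null$; reading off the blocks $a_c$ recovers $P_1$ with extra $null$-columns inserted, so $A_{13}$ lifts to an alignment of $Q$ with $P_3$ by letting every real-$P_1$ column of $Q$ inherit the $P_3$-block that $A_{13}$ assigned to that $P_1$ column, matching every $P_2$-insertion column of $Q$ to $null$ in $P_3$, and turning the $P_3$-columns that $A_{13}$ left as gaps into insertions against $Q$. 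The merged column is then $a_c\cup b_c\cup P_3[k]$ (reading $null$ for an absent block), and applying~(\ref{eq_set_tria}) with the $P_1$-block $a_c$ as the middle set gives $D(a_c\cup b_c\cup P_3[k])\le D(a_c\cup b_c)+D(a_c\cup P_3[k])$. Summing, the first terms collect to $\sum_c D(Q[c])=size(P_1\uplus P_2)$ and the second terms collect to the column diameters of $A_{13}$, namely $size(P_1\uplus P_3)$; since $size((P_1\uplus P_2)\uplus P_3)$ is the minimum over all alignments of $Q$ with $P_3$, it does not exceed this constructed size, which gives~(\ref{eq_para_tria1}).

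The main obstacle is the bookkeeping of these constructed alignments rather than any individual estimate. I must verify that every column of $P_1$ (and of $P_2$) is consumed exactly once, that the interspersed $null$-only columns really have diameter $0$ so that the two accounting sums equal precisely the intended right-hand-side terms, and that each object I build is a genuine alignment, with at most one real block per input in each column and the original column order preserved. Once this combinatorial setup is correct, the per-column inequalities follow directly from monotonicity and~(\ref{eq_set_tria}), so the difficulty lies entirely in the alignment composition and lifting, not in the metric estimates themselves.
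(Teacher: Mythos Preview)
Your argument for (\ref{eq_para_tria1}) is essentially the paper's: both freeze $Q=P_1\uplus P_2$, lift an optimal alignment of $P_1$ with $P_3$ to an alignment of $Q$ with $P_3$ along the $P_1$-columns (the paper writes this as $(P_1\uplus P_2)\uplus_{P_1}P_3$), and then apply the column-wise triangle inequality (\ref{eq_set_tria}) with the $P_1$-block as the middle set. Both derive (\ref{eq_para_tria3}) from (\ref{eq_para_tria2}) by relabelling.

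The genuine difference is in (\ref{eq_para_tria2}). You prove it directly by a second superposition construction, gluing optimal alignments $A_{12}$ and $A_{23}$ along their shared $P_2$-columns and then applying (\ref{eq_set_tria}) column by column. The paper instead reads (\ref{eq_para_tria2}) off as a one-line corollary of (\ref{eq_para_tria1}) and monotonicity (\ref{eq_par_mono2}): by symmetry of $\uplus$, inequality (\ref{eq_para_tria1}) with the roles of $P_1,P_2$ swapped gives $size((P_1\uplus P_2)\uplus P_3)\le size(P_2\uplus P_3)+size(P_1\uplus P_2)$, and (\ref{eq_par_mono2}) gives $size(P_1\uplus P_3)\le size((P_1\uplus P_2)\uplus P_3)$. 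So the paper's route is shorter and reuses the single construction you already built for (\ref{eq_para_tria1}); your route is self-contained but performs essentially the same alignment-composition work twice. One small slip in your write-up: you say each column of $Q$ has a $P_1$-block and a $P_2$-block ``one of which is $null$'', but in matched columns both blocks are real; this does not affect the rest of the argument.
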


Eq. \ref{eq_para_tria1} has a not so \emph{triangle} form compared with Eq. \ref{eq_para_tria2}, so we call it the \emph{Pseudo Triangular Inequality}.
\begin{proof}
We introduce a new paradigm $(P_1\uplus P_2)\uplus_{P_1} P_3$.
By the subscribe '$P_1$' it means that when aligning $P_3$ with $P_1\uplus P_2$, the relative positions are the same as those in $P_1\uplus P_3$.
Obviously, $size((P_1\uplus P_2)\uplus P_3)\le size((P_1\uplus P_2)\uplus_{P_1} P_3)$.
It becomes sufficient to prove $$size((P_1\uplus P_2)\uplus_{P_1} P_3) \le size(P_1\uplus P_3) + size(P_1\uplus P_2)$$

In $(P_1\uplus P_2)\uplus_{P_1} P_3$, for a random position $pos$, we denote the character sets at $pos$ coming from $P_1$, $P_2$ and $P_3$ by $C_1$, $C_2$ and $C_3$ respectively for ease.
Because the only change of $P_1\uplus P_2$ is inserting a set of $null$s (whose size equals $0$), $size({P_1\uplus P_2})$ remains unchanged.
It becomes sufficient to prove $D(C_1\cup C_2\cup C_3)\le D(C_1\cup C_3)+D(C_1\cup C_2)$, which directly holds according to Eq. \ref{eq_set_tria}.

Eq. \ref{eq_para_tria2} can be directly proved using Eq. \ref{eq_para_tria1} and Eq. \ref{eq_par_mono2}, and Eq. \ref{eq_para_tria3} can be inferred from Eq. \ref{eq_para_tria2} by element substitutions.
\end{proof}

In our framework, $size(P_1\uplus P_2)$ is likely to be unavailable, instead, the bounds $lb(P_1, P_2)$ and $ub(P_1, P_2)$ are maintained.
Thus, the previous inequalities should be adjust before using, for example, Eq. \ref{eq_para_tria1} becomes $lb((P_1\uplus P_2),P_3) \le ub(P_1, P_3) + ub(P_1, P_2)$.
In the rest of this paper, we refer the inequalities by their adjusted versions, if not specified explicitly.

Eq. \ref{eq_par_mono1}, \ref{eq_par_mono2} and \ref{eq_para_tria1} provide bounds when a new merged paradigm is added into $\mathcal{P}$.
If $P_1$ and $P_2$ are identified as the pair with minimal size, then after adding $P_1\uplus P_2$ into $\mathcal{P}$, $lb(P_1\uplus P_2, P)$ and $ub(P_1\uplus P_2, P')$ should be initialized for each $P'$ in $\mathcal{P}$.

According to Eq. \ref{eq_par_mono1}, $size(P_1\uplus P_2)$ is a candidate value of $lb(P_1\uplus P_2, P')$.
With Eq. \ref{eq_par_mono2}, $size(P_1\uplus P_3)$ (similarly, as well as $size(P_2\uplus P_3)$) can be another candidate value of $lb(P_1\uplus P_2, P')$.
For $ub(P_1\uplus P_2, P')$, Eq. \ref{eq_para_tria1} asserts that $size(P_1\uplus P_3) + size(P_1\uplus P_2)$ and $size(P_2\uplus P_3) + size(P_1\uplus P_2)$ are available upper bounds.

On the other hand, Eq \ref{eq_para_tria2} and \ref{eq_para_tria3} play an important role when the former three are not applicable.
For instance, before the iterations begin, each paradigm contains a single string, while Eq. \ref{eq_par_mono1}-\ref{eq_para_tria1} focus on new merged paradigms.
In addition, while refining $\mathcal{CR}$, no new paradigm is created and the critical intervals require tightening.
By calculating the exact values of $size(P_1\uplus P_2)$ and $size(P_2\uplus P_3)$, interval of $size({P_1\uplus P_3})$ would be tightened.

Now we discuss how to find pairs whose sizes should be exactly evaluated to refine the critical set $\mathcal{CR}$.
By $\mathcal{P(CR)}$, we denote the set of paradigms involved in $\mathcal{CR}$.
To tighten $I(P_1,P_2)$ for all pairs $P_1,P_2\in\mathcal{P(CR)}$, we propose a \emph{single-pivot-star} refining technique:
Select a paradigm $P\in\mathcal{P(CR)}$ as the pivot, then evaluate the exact value of $size(P,P')$ for each paradigm $P'$ in $\mathcal{P(CR)}$ except $P$ it self.
By doing so, bounds for each pair of paradigms $P_1,P_2$ can be updated by:
\begin{equation*}
lb'(P_1,P_2)=
\begin{cases}
size(P_1,P_2) &\mbox{if $P\in\{P_1,P_2\}$.}\\
|size(P_1,P)-size(P,P_2)| &\mbox{else}
\end{cases}
\end{equation*}
and
\begin{equation*}
ub'(P_1,P_2)=
\begin{cases}
size(P_1,P_2) &\mbox{if $P\in\{P_1,P_2\}$.}\\
size(P_1,P)+size(P,P_2) &\mbox{else}
\end{cases}
\end{equation*}

We demonstrate this by an example
\begin{example}
In Fig. \ref{fig2}, there are $5$ paradigms, presented by nodes, among which $P_0$ is selected as the pivot.
$size(P_0\uplus P_1)$, $size(P_0\uplus P_2)$, $size(P_0\uplus P_2)$ and $size(P_0\uplus P_4)$ are evaluated exactly (equal $1.5$, $2$, $2$ and $1$ respectively), presented by solid lines and shaped like a star.
All other pairs' bounds are obtained according to Eq. \ref{eq_para_tria2} and \ref{eq_para_tria3}: $I(P_1,P_2)=I(P_1,P_3)=[0.5,3.5], I(P_1,P_4)=[0.5,2.5], I(P_2,P_3)=[0,4], I(P_2,P_4)=I(P_3,P_4)=[1,3]$.
\end{example}

The intuition of \emph{single-pivot-star} refining is that, all pairs' bounds can be evaluated exactly or by Eq. \ref{eq_para_tria2} and \ref{eq_para_tria3} directly.
Meanwhile, it carries out the minimal number (equals $|\mathcal{P(CR)}| - 1$) of necessary merging operations, because connecting all paradigms requires that much edges.

\noindent{\textbf{Pivot Selection}}

Now we show how to select a pivot wisely.
$CR$ is a set of intervals overlapping with each other.
From Fig. \ref{fig1}, it can be shown that tightening intervals help reducing $\mathcal{CR}$.
Thus we evaluate a pivot's ability of shortening intervals.
A score function on $\mathcal{P(CR)}$ is introduced:
\begin{equation}
\label{eq_score}
score(P)=\sum\limits_{I(P_1,P_2)\in\mathcal{CR}, P\in\{P_1,P_2\}}{ub(P_1,P_2)-lb(P_1,P_2)}
\end{equation}
That is, sum of widths of intervals that are related with $P$.
Then the paradigm with max score is selected as the pivot: $pivot=\mathop{\argmax_{P}score(P)}$.
In single-pivot-start refining, we can see that all intervals about $P$ shrink into exact values, the corresponding total width reduced equals $score(P)$, which should better be maximized.

\begin{figure}
\centering
\includegraphics[width=0.3\textwidth]{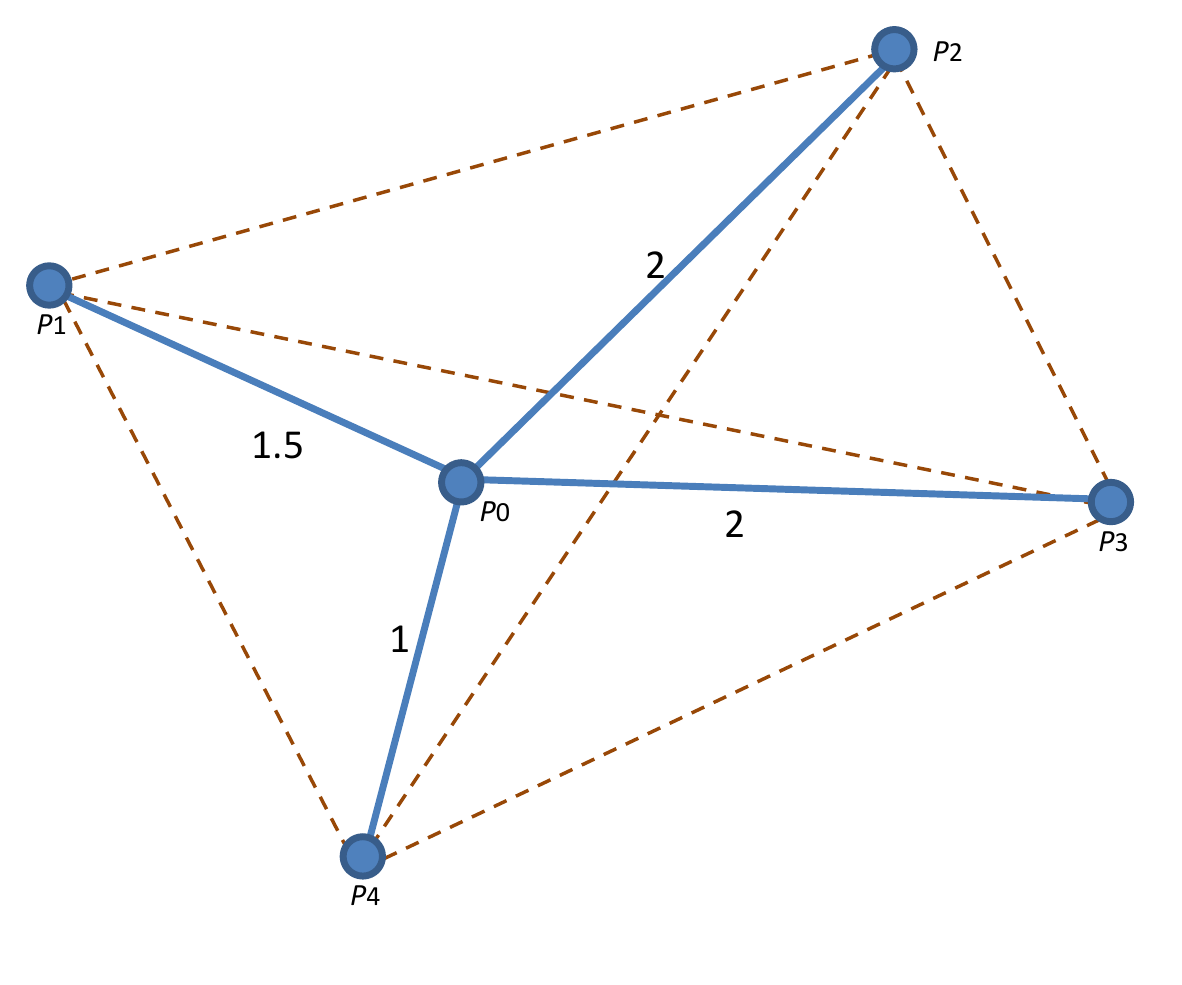}
\caption{An example of \emph{single-pivot-star} refining.}
\label{fig2}
\end{figure}

By now, pruning techniques based on lower and upper bounds have been discussed.
Next, we introduce \emph{Independency Base Pruning}, to improve the algorithm efficiency further.
\subsection{Independency Based Pruning}
\label{indepency_pruning}
When identifying $\mathcal{CR}$, up to $O(|\mathcal{P}|^2)$ intervals are identified to be critical in the worst case.
That is the case if almost all paradigm pairs are of the same size.
It is not hard to get that $|\mathcal{P(CR)}|\ge |\mathcal{CR}|$.
Meanwhile, in the refining process, up to $|\mathcal{P(CR)}| - 1$ actual merging operations are required and $(\mathcal{P(CR)}| - 1)\times(\mathcal{P(CR)}| - 2)/2$ intervals are to be refined.
As a result, $O(|\mathcal{CR}|)$ merging and $O(|\mathcal{CR}|^2)$ refining operations are performed, this will degrade the efficiency seriously.

To reduce $\mathcal{CR}$'s cardinality, it can be identified more wisely.
By $I_{um}$ we denote the interval from which $ub_{min}$ is found, e.g., $I(P_1,P_2)$ in Fig. \ref{fig1a}.
We consider of verifying those intervals who are dependent with $I_{um}$ only.
By '\emph{dependent}', it means that one interval shares a common paradigm with another.
We demonstrate this by an example:
\begin{example}
\label{exmp3}
In Fig. \ref{fig1} of Example \ref{exmp2}, $I_{um}$ is $I(P_1,P_2)$ because it holds the lowest upper bound.
$I(P_1,P_3)$ and $I(P_2,P_3)$ are dependent with $I(P_1,P_2)$, because they share a common paradigm with $I(P_1,P_2)$ (i.e., $P_1$ and $P_2$ respectively).
On the other hand, $I(P_3,P_4)$ is independent with $I(P_1,P_2)$ because they share no common paradigm.
With independency based pruning, $I(P_3,P_4)$ is not added into $CR$, and the refining terminates after a single refining process (see Fig. \ref{fig1b}).
\end{example}

Under this pruning strategy, a paradigm pair with higher size may be merged earlier than another pair with lower size.
Fortunately, this adjustment doesn't affect the final result, which can be proved theoretically.
\begin{theorem}
\label{thm_ind_prun}
If $P_1$ and $P_2$ is of the minimal merging size among all pairs involving $P_1$ or $P_2$, they will be actually merged sooner or later.
\end{theorem}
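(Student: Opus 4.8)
The plan is to combine a monotonicity invariant with termination of \emph{Pairwise Merge}. The backbone is the observation that, as long as both $P_1$ and $P_2$ remain in $\mathcal{P}$, the value $size(P_1\uplus P_2)$ never changes (its operands stay untouched), whereas every competing pair involving $P_1$ or $P_2$ can only grow. Since the algorithm always merges the minimal pair it finds, and since it must terminate after exactly $N-1$ merges with a single paradigm, I would locate the first step at which $P_1$ or $P_2$ is chosen for merging and show that its partner is forced to be the other one.

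First I would establish the invariant: while both $P_1,P_2\in\mathcal{P}$, the pair $(P_1,P_2)$ stays minimal among all pairs involving $P_1$ or $P_2$. Any current competitor involving $P_1$ has the form $P_1\uplus Q$. If $Q$ is an original single-string paradigm, then $size(P_1\uplus Q)\ge size(P_1\uplus P_2)$ is exactly the hypothesis. If $Q=Q_1\uplus Q_2$ is itself a merged paradigm, then Eq.~\ref{eq_par_mono2} gives $size(P_1\uplus Q)\ge size(P_1\uplus Q_1)$, and peeling off merges repeatedly reduces $Q$ to one of its original sub-paradigms, to which the hypothesis applies; the case of $P_2$ is symmetric. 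Because $size(P_1\uplus P_2)$ is itself constant, minimality is preserved through every merge that does not consume $P_1$ or $P_2$.

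Next, using termination, let $\tau$ be the first iteration whose merged pair contains $P_1$ or $P_2$; such a $\tau$ exists since the final paradigm absorbs both. Before $\tau$ neither is touched, so the invariant holds at $\tau$. Say the merged pair is $(P_1,Y)$ (the $P_2$ case is symmetric). The merged pair is the minimal-size survivor of the critical set, so its size is at most $ub_{min}$; combined with the invariant $size(P_1\uplus P_2)\le size(P_1\uplus Y)$ this yields $lb(P_1,P_2)\le size(P_1\uplus P_2)\le ub_{min}$, so bound-based pruning alone would never discard $(P_1,P_2)$. Minimality then forces $Y=P_2$, which is the claim.

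The hard part will be step three under \emph{Independency Based Pruning}, where $\mathcal{CR}$ retains only intervals sharing a paradigm with the driving interval $I_{um}$. I must rule out that a pair such as $(P_1,P_a)$, dependent with some $I_{um}=(P_a,P_b)$ disjoint from $\{P_1,P_2\}$, is merged before $(P_1,P_2)$ ever enters $\mathcal{CR}$. The plan is to track $I_{um}$ across the refining iterations of Section~\ref{indepency_pruning}: once the exact size of the candidate survivor involving $P_1$ is computed, its upper bound drops to that size, which is no larger than $ub_{min}$, so it ties or beats the least upper bound and may be taken as the new $I_{um}$; at that point $(P_1,P_2)$ shares $P_1$ with $I_{um}$, is therefore dependent with it, and is pulled into $\mathcal{CR}$, after which $size(P_1\uplus P_2)\le size(P_1\uplus P_a)$ makes it the chosen pair. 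Establishing that this re-selection of $I_{um}$ is consistent with the refining procedure, and handling ties in the phrase ``minimal merging size'' by a fixed tie-break so that $(P_1,P_2)$ is eventually preferred, is where the real care is needed.
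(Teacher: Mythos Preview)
Your first two steps --- the monotonicity invariant via Eq.~\ref{eq_par_mono2} and the termination argument locating the first iteration at which $P_1$ or $P_2$ participates --- are exactly the paper's proof. The paper shows that whenever two other paradigms $P',P''$ merge, the new competitor $(P_1,P'\uplus P'')$ still has size at least $size(P_1\uplus P_2)$, and then concludes in a single sentence that ``when $P_1$ (or $P_2$) is merged, it must be merged with $P_2$ ($P_1$ resp.)''. That is the whole argument; the paper never engages with the mechanics of $\mathcal{CR}$, $I_{um}$, or the pruning rules.

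Your step~3 therefore goes beyond the paper, and it targets a genuine gap the paper's one-line conclusion leaves open: under independency-based pruning, $\mathcal{CR}$ is restricted to intervals sharing a paradigm with $I_{um}$, so if $I_{um}=I(P_a,P_b)$ with $\{P_a,P_b\}\cap\{P_1,P_2\}=\emptyset$, the pair $(P_1,P_2)$ is never placed in $\mathcal{CR}$, while $(P_1,P_a)$ may well be and could emerge as the survivor. The paper's conclusion tacitly assumes the merged pair is minimal among \emph{all} pairs containing $P_1$, which is precisely what independency pruning can break. Your proposed remedy --- letting the exactly-computed survivor involving $P_1$ become the new $I_{um}$ so that $(P_1,P_2)$ is ``pulled into $\mathcal{CR}$'' --- requires Line~10 of Algorithm~\ref{alg_pruning_merge} to \emph{rebuild} $\mathcal{CR}$ around a freshly chosen $I_{um}$, not merely shrink it. The algorithm text (``Update $ub_{min}$ and $\mathcal{CR}$ accordingly'') and Example~\ref{exmp3} read as shrink-only; under that reading your patch does not go through, and neither does the paper's bare assertion without an additional argument that the survivor of the refined $\mathcal{CR}$ is always locally minimal. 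In short: the invariant portion matches the paper, your extra scrutiny is warranted, but the fix you sketch hinges on an algorithmic detail the paper leaves ambiguous.
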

\begin{proof}
Relationship between $P_1$, $P_2$ and all other paradigms is illustrated in Fig. \ref{fig3}.
$P_1$ and $P_2$ have the minimal merging size with each other.
Note that there probably exist pairs having lower merging size than $size(P_1,P_2)$.

Now it can be shown that $P_1$ and $P_2$ will be merged eventually, no matter sooner or later.
When merging occurs between other paradigms, say $P'$ and $P''$, $P'\uplus P''$ is added in to $\mathcal{P}$, we have $size(P_1\uplus P_2)<size(P_1\uplus (P'\uplus P''))\le size(P_1,P')$, according to Eq. \ref{eq_par_mono2}.
$P_1$ and $P_2$ still have the minimal merging size with each other.
Thus, it can be concluded that when $P_1$ (or $P_2$) is merged, it must be merged with $P_2$ ($P_1$ resp.), so it makes no difference by merging them sooner or later.
\end{proof}

\begin{figure}
\centering
\includegraphics[width=0.4\textwidth]{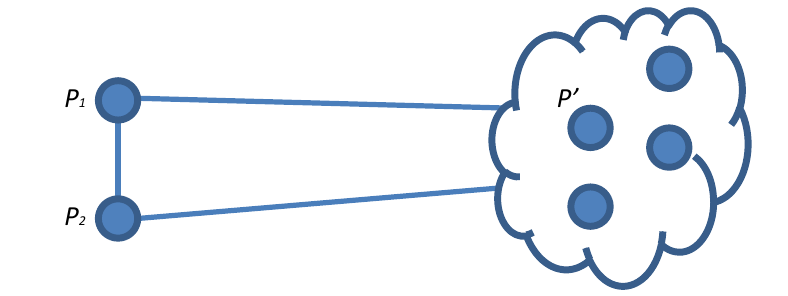}
\caption{Relationship between $P_1,P_2$ and other paradigms.}
\label{fig3}
\end{figure}

\subsection{Algorithm Flow}
Our pruning based algorithm is illustrated in Algorithm \ref{alg_pruning_merge}.
In lines 1-3, The paradigm set $\mathcal{P}$ as well as size intervals are initialized.
Then paradigms in $\mathcal{P}$ are merged in pair-wise manner iteratively, until a single one is obtained (Lines 4-15).
In each iteration, the critical interval set $\mathcal{CR}$ is identified (Lines 5-6).
Then $\mathcal{CR}$ is refined iteratively until a single interval left (Lines 7-10).
In line 8, a pivot paradigm is selected according to Eq. \ref{eq_score}.
Then intervals in $\mathcal{CR}$ are refined by calling algorithm \ref{alg_refine}, and $ub_{min}$ and $\mathcal{CR}$ are updated accordingly (Line 10).

After the refining process terminated, from the single interval in $\mathcal{CR}$, two paradigms can be identified to be merged into a new one (Line 11), the paradigm set $\mathcal{P}$ is updated (Lines 12 and 13).
For the new added paradigm $P$, the sizes' bounds of merging it with others are evaluated (Lines 14 and 15).

\begin{algorithm}[h]
    \caption{Pruning Merge}
    \begin{algorithmic}[1]
    \label{alg_pruning_merge}
    \REQUIRE A set $\mathcal{S}$ of strings over charset $\mathcal{C}$ and a metric $d$ defined on $\mathcal{C}$.
    \ENSURE An aligning of $\mathcal{S}$ with size as small as possible.
    \STATE {Initialize the paradigm set $\mathcal{P}$ with $\mathcal{S}$.}
    \FOR {each pair of paradigms $P_1,P_2\in\mathcal{S}$}
    \STATE {$I(P_1,P_2)\leftarrow [0,+\infty]$}
    \ENDFOR
    \WHILE {$|\mathcal{P}|>1$}
    \STATE {$I_{um}\leftarrow$ Interval with the lowest upper bound.}
    \STATE {Identify the critical set $\mathcal{CR}$ by $I_{um}$.}
    \WHILE {$|CR|>1$}
    \STATE {Identify the pivot paradigm $P\in\mathcal{P(CR)}$ using Eq. \ref{eq_score}.}
    \STATE {Refine($\mathcal{CR}$,$P$)}
    \STATE {Update $ub_{min}$ and $\mathcal{CR}$ accordingly.}
    \ENDWHILE
    \STATE {$P\leftarrow P_1\uplus P_2$, where $P_1,P_2\in\mathcal{P(CR)}$}
    \STATE {Add $P$ into $\mathcal{P}$.}
    \STATE {Remove $P_1$ and $P_2$ from $\mathcal{P}$}.
    \FOR {each paradigm $P'\in \mathcal{P}-\{P\}$}
    \STATE {Update $lb(P,P')$ and $ub(P,P')$ using Eq. \ref{eq_par_mono1}-\ref{eq_para_tria1}.}
    \ENDFOR
    \ENDWHILE
    \RETURN the single paradigm in $\mathcal{P}$;
    \end{algorithmic}
\end{algorithm}

Algorithm \ref{alg_refine} provides a procedure tightening intervals in $\mathcal{CR}$ by $P$.
For each paradigm $P'$ involved in $\mathcal{CR}$, the exact value of $size(P,P')$ is calculated (Lines 2-3).
With these exact values, other paradigm-pairs' sizes are updated using Eq. \ref{eq_para_tria2} and \ref{eq_para_tria3}, if they make the interval tighter (Lines 5-6).

By pruning, number of merging times in Algorithms \ref{alg_pruning_merge} is no more than that in Algorithm \ref{alg_pairwise_merge}.
Meanwhile, at least one merging operation is performed in lines 7-10.
We can conclude that Algorithm \ref{alg_pruning_merge} always terminates.

\begin{algorithm}[h]
    \caption{Refine($\mathcal{CR}$, $P$)}
    \begin{algorithmic}[1]
    \label{alg_refine}
    \STATE {$paradigms\leftarrow\mathcal{P(CR)}-\{P\}$.}
    \FOR {each paradigm $P'\in paradigms$}
    \STATE {Set $lb(P,P')$ and $ub(P,P')$ with $size(P\uplus P')$.}
    \ENDFOR
    \FOR {each pair of paradigms $P_1,P_2\in paradigms$}
    \STATE {Update $lb(P_1,P_2)$ and $ub(P_1,P_2)$ using Eq. \ref{eq_para_tria3} and \ref{eq_para_tria2} according to $size(P\uplus P_1)$ and $size(P\uplus P_1)$}
    \ENDFOR
    \end{algorithmic}
\end{algorithm}
\section{Finding Paradigm-Dependency}
\label{finding}
Given a paradigm $P$, $card(P)$ can be very large, making it hard to express.
We consider of compacting $P$ by deduplicating in columns, denoted by $Comp(P)$.
For instance, in Table \ref{tab1} of Example \ref{exmp1}, \emph{Type} of the first three tuples are aligned into a new column \emph{Aligned Type} (a paradigm $P$).
By duplication, we get: $Comp(P)=\{ST\}[L]\{45\}\{126\}0[i]$.
Actually it's of the form of star-free \emph{Regular Expressions}: exactly one of the elements in braces '$\{\}$' is supposed to appear, and no more than one of elements in square brackets '$[]$' is supposed.
For instance, \emph{TL510i} is a string matches $Comp(P)$.

By now, with paradigms generated in the previous section, the notion of \emph{Paradigm Dependencies} can be defined.

\begin{definition}[Paradigm Dependency]
\label{paradigm_dependency}
a paradigm dependency $\varphi$ is of the form $(P,i)\rightarrow A$.
Where $P$ is a paradigm generated on the a string-type attribute $STR$, $i$ is a integer between $1$ and $length(P)$, $att$ and $STR$ are attributes in $\mathcal{A}$, which is the attribute set on relation $R$.

The semantic of $\varphi$ is that, given an instance $D$ on $R$, for any two tuples $t_1$ and $t_2$ in $D$ with their $STR$ values matching $Comp(P)$, and $c_1$ and $c_2$ are the two characters aligned to the $i$th column of $P$ in $t_1[STR]$ and $t_2[STR]$ respectively, if $c_1\ne c_2 \vee t_1[A]=t_2[A]$ always holds, we say that $D$ satisfies $\varphi$, denoted as $D\vdash\varphi$.
\end{definition}

The intuitive meaning of $\varphi$ is that, characters in $STR$ aligned to the specified location $i$ can decide the values of $A$.

Before defining the discovery problem of paradigm dependencies, several measures are necessary for finding high quality dependencies.

\textbf{Support} is defined as the maximum number of tuples satisfying the dependency:
\begin{equation*}
Support(D,\varphi)=\max\limits_{D'\subseteq D, D'\vdash\varphi}{|D|}
\end{equation*}
$D$ is the dataset from which dependencies are discovered.
Support is a frequency measure based on the idea that values which occur together frequently have more evidence to substantiate that they are correlated and hence are more interesting.
A threshold $support_{min}$ should be specified and every dependency discovered on $D$ should have a no lower support.

\textbf{Confidence} of $\varphi$ over $D$ can be stated as:
\begin{equation*}
Confidence(D,\varphi)=\frac{Support(D,\varphi)}{|D|}
\end{equation*}
it measures the level to which $D$ satisfies $\varphi$.
It is employed in this paper for two reasons: 1) $D$ may contains incorrect values and 2) characters with the same meaning are not always aligned to the same column.
Again, a threshold $confidence_{min}$ should be specified to filter the discovered dependencies.

It is possible that domain of the right hand attribute $A$ is very small.
For a real world instance, there are only two mainstream cpu brands (\emph{AMD} and \emph{Intel}) for computers.
Thus In $D$, it becomes possible that only a single value occurs on the right hand side attribute $A$.
In this situation, all dependencies with $A$ on the right hand side are satisfied by $D$, which have no meaningful information and should be discarded.
To handle this, the \textbf{Diversity} measure is introduced:
\begin{equation*}
Diversity(D,\varphi)=\sigma(D[A])
\end{equation*}
where $D(A)$ is the set of values of attribute $A$ occurred in $D$, and by $\sigma(\cdot)$, it counts the number of distinct values.

On the other hand, it is also possible that there are so many different characters occurring on the $i$th column in the aligned strings, that any pair of tuples have different values on the left hand side.
That means, $c_1\ne c_2$ in Definition \ref{paradigm_dependency} always hold, as a sequence, all dependency with the $i$the column on the left hand side are always satisfied.
They are meaningless and should be discarded, by introducing the measure \textbf{InnerSupport}:
\begin{equation*}
InnerSupport(D,\varphi)=\max\limits_{D'\subseteq D, D'\vdash\varphi}{\eta(P[i])}
\end{equation*}
where $\varphi$ is defined on $P$ and $i$, and $\eta(\cdot)$ denotes the number of the most frequent value's occurrence.
For example, $\eta({a,a,b,a,b,a,c,d})=4$ because $a$ is most frequent value and it occurs for $4$ times.
Similarly, the corresponding threshold $InnerSup_{min}$ is required.

Now we are ready to define the discovery problem formally.
\begin{definition}[Discovery of Paradigm Dependencies]
Given a dataset $D$ on attribute set $\mathcal{A}$, $P$ is a paradigm aligned on $D[STR]$, where $STR$ is a string-type attribute in $\mathcal{A}$,
as well as four measure thresholds: $Support_{min}$, $Confidence_{min}$, $Diversity_{min}$ and $InnerSup_{min}$.
Find out all paradigm dependencies $\varphi:(P,i)\rightarrow A$ satisfying the thresholds, where $i\in\{1,2,...,length(P)\}$ and $A\in\mathcal{A}$.
\end{definition}

Each tuple $t_l\in D$ claims a key-value pair for dependency $\varphi$, denoted by $\langle k_l, v_l\rangle$, thus a multiple set of key-value pairs can be obtained: $Claims=\{<k_l,v_l>|t_l\in D\}$.
It is not hard to obtain the support measure from $Claims$:
\begin{equation*}
Support(D,\varphi)=\sum\limits_{k}{\max\limits_{\langle k,v\rangle\in Claims}{\mathcal{N}(\langle k,v\rangle, Claims)}}
\end{equation*}
where $\mathcal{N}(\langle k,v\rangle, Claims)$ denotes the number of occurrences of $\langle k,v\rangle$ in $Claims$.

Then the confidence and diversity measures can be directly evaluated by their definitions.

Similar to support, inner support can be calculated by:
\begin{equation*}
Support(D,\varphi)=\max\limits_{\langle k,v\rangle\in Claims}{\mathcal{N}(\langle k,v\rangle, Claims)}
\end{equation*}
Actually, support and inner support are maximized meanwhile, i.e., on the same subset $D'\subseteq D$.

A straightforward method is evaluating the measurements on each paradigm $P$ generated in Algorithm \ref{alg_pruning_merge} (Line 11), then discard unqualified ones.

There are two prune strategies to reduce useless work: 1) a paradigm $P$ with $card(P)\le support_{min}$ can be discarded directly, because it is obvious that the support measure cannot be higher than $card(P)$
and 2) if both $\varphi_1:(P_1,i_1)\rightarrow A$ and $\varphi_2:(P_2,i_2)\rightarrow A$ are discarded due to low confidence, then $\varphi:((P_1\uplus P_2), i)\rightarrow A$ has a low confidence measure as well, where the $i_1$th column of $P_1$ and the $i_2$th column of $P_2$ are aligned into the $i$th column of $P$.
It is not hard to prove and we omit the detailed discussion here due to space limitation.

\section{experiments}
\label{experiments}
\subsection{Settings}
\noindent

All experiments were conducted on a machine with 256 2.2GHz Intel cpus (among which, only one cpu is used) and 3TB of RAM.
All algorithms had been implemented in Java with heap size set to 128GB.
The underlying operating system is CentOS.
\subsubsection*{Real Datasets}
We used three real datasets, namely the Notebook, CPU and RAM datasets.
All of these datasets were manually collected from the corresponding official websites.
Each record contains a string-type attribute, which is the product type, id or serial number, and can be seen as an identifier of that product.
We call this attribute \emph{ID}, values on which are to be aligned.
Other attributes are used to describe specification of a product.
For example, in the Notebook dataset, there are attributes such as \emph{Screen Size, Model of Video Card, CPU's frequency, etc.}
Not all attribute values for a product can be obtained, and when finding paradigm dependencies, key-value pairs containing $null$s are simply discarded.
Statistical information of the three subsets is listed in Table \ref{realdataset}.

\begin{table}\centering
\caption{Real world datasets statistics}
\vskip 3mm
\begin{tabular}{l|l|l|l} \hline
\label{realdataset}
Data Sets & Notebooks & CPUs & RAMs\\ \hline
\#Brands & 8 & 2 & 26 \\
\#Records & 2004 & 1592 & 1532 \\
\#Attributes & 35 & 25 & 12 \\
AVG ID length & 21.5 & 15.3 & 12.6\\
MAX ID length & 50 & 23 & 25 \\ \hline
\end{tabular}
\end{table}
\subsubsection{Synthetic Datasets}
Synthetic data were used in the efficiency evaluation for the aligning problem, so only those values on ID were generated.
When synthesizing datasets, several parameters were used, namely \emph{length of ID ($l$)}, \emph{Number of Records ($N$)}, \emph{Number of Clusters ($CN$)} and \emph{Variation Ratio ($\sigma$)}.
The charset for generating strings is $Charset=\{0,1,...,9, a,b,...z,-,\_,/,...\}$, i.e., numeric digits, letters and several other visible characters in ASCII.

Firstly, a single string with length $l$ was generated for each cluster, with characters randomly selected from $Charset$.
This string was used as a \emph{seed} to generate more similar strings for the corresponding cluster.

Nextly, $N$ strings were generated according to seeds generated before.
For each string, we select a seed at random, with uniform probabilities.
Then by copying the seed with variation, a new string can be generated similar to that seed.
By variation, it means that when copying, each character can be changed with a probability $\sigma$.
For similarity, a character is more probable to be changed into one with the same type.
For example, if the character 'f' is changed, it will change into 'd' more possible than into '0' or '\_'.
To simulate real world situations, a character is also possible to be deleted from the string.
By doing this, similar strings with different length can be obtained.
When a new string is generated, it can also be used as a seed, to avoid the situation that all strings in a cluster generated from the same single seed.

Table \ref{default_params} shows some parameters considered in the experiments.
When not explicitly stated, we use the default configuration value (highlighted in bold).

\begin{table}[t]
\centering
\caption{Experiment parameter configuration ranges.}
\label{default_params}
\begin{tabular}{|c|c|}
    \hline
    Parameter & Range \\ \hline
    String Length ($l$) & [10,...,\textbf{20}] \\
    Number of Strings ($N$) & [1000,...,\textbf{5000}] \\
    Number of Clusters ($CN$) & [10,...,\textbf{50},...,100] \\
    Variation Ratio ($\sigma$) & [0.01, ...,\textbf{0.05},0.10] \\ \hline
\end{tabular}
\end{table}

\subsubsection{Algorithms}
The Single-Merge based algorithm (Algorithm \ref{alg_single_merge}) does not fit our framework in aligning and clustering simultaneously, thus it was not evaluated experimentally.
The Pairwise merging algorithm (Algorithm \ref{alg_pairwise_merge}) provides the basic idea of this paper, it was considered as a \textbf{baseline} algorithm to evaluate the pruning techniques' efficiency.
Two pruning strategies were proposed on the baseline algorithm, namely bound based and independency based techniques.
The independency based one is very efficiency, such that when abandoned, the algorithm becomes extremely inefficient.
This made it hard to be evaluated in an acceptable time period, so we use this technique in default in all versions of the pruning based method.

As discussed before, all inequalities for bound based pruning are used either in the refining phase (i.e., Eq. \ref{eq_para_tria3} and \ref{eq_para_tria2}), or when new paradigm is created (i.e., Eq. \ref{eq_par_mono1}-\ref{eq_para_tria1}).
The former ones are basic of the pruning based method, so they were used by default.

We evaluated two versions of the pruning based algorithm: has or hasn't used Eq. \ref{eq_par_mono1}-\ref{eq_para_tria1}, the former is denoted by \textbf{Prunning+} and the latter by \textbf{Pruning-}.

\subsection{Discovered Rules}
\begin{table*}[t]
\centering
\begin{tabular}{|l|l|}
\hline
Compacted Paradigm and the location & Attribute\\
\hline
CORE I5-\{2-46-8\}\{3-6\}\underline{\textbf{\{0-79\}}}0 & Video Card\\
PHENOM X\{34\} \{89\}\underline{\textbf{\{1-9\}}}\{05\}0[E] & Frequency\\
ATHLONII \{XN\}\underline{\textbf{\{2-46E\}}}[O] \{1246-8BK\}\{0-8\}\{0-2458\}[5][TEKUX] & Core Numer\\
\hline
OCZ3\{PFGX\}\{12\}\{036\}\{03\}\{03\}\{FL\}\{BV\}\underline{\textbf{\{2-468\}}}G[K] & Memory Size\\
HX4\{23\}\underline{\textbf{\{01468\}}}C1\{2-6\}\{PF\}BK\{24\}/\{13\}\{26\} & Memory Frequency\\
P\{DVG\}\{CTV\}\{23\}\{2-468\}G\{12689\}\underline{\textbf{\{03-6\}}}\{03\}\{03\}\{FL\}LK & Transmission Standard\\
\hline
\{AEGSTU\}[H]\{1579\}\underline{\textbf{\{2-57\}}}\{5-7X\} & Screen Size\\
\{AEGSTU\}[TH]\{1579\}\{2-57PV\}\underline{\textbf{\{5-7RX\}}}[AO] & Video Card Model\\
VOSTRO-15-\{35\}56\{28\}--LAPTOP VOSTRO15-\{35\}56\{28\}-D\{12\}\underline{\textbf{\{1-35-8\}}}\{24\}5\{BRSL\} & Turbo Boost Frequency\\
\hline
\end{tabular}
\caption{Example dependencies found from the three real datasets}
\label{discovered_rules}
\end{table*}
In discovering paradigm dependencies from the three dataset, thresholds were set as: $Support_{min}=10$, $Confidence_{min} = 0.9$, $Diversity_{min} = 5$ and $InnerSup_{min} = 5$.
As to the distance function $d$, we specified zero distance for the same characters, and $0.5$ for different ones but with the same type, $1.5$ for ones with different type.
For example, $d('a', 'a')=0$, $d('a', 'b')=0.5$ and $d('a', '9')=1.5$.

Eventually, there were $33$, $91$ and $8$ dependencies discovered from the Notebook, RAM and CPU dataset respectively.
All these rules are verified true manually.
With lower thresholds, the result set can grows much larger.

In Table \ref{discovered_rules}, several discovered dependencies are listed in three groups, with each group corresponding to a single dataset.
The paradigm in the dependency is demonstrated in a compact format, actually it is a star-free Regular Expression: exactly one of the elements in braces '$\{\}$' is supposed to appear, and no more than one of elements in square brackets '$[]$' is supposed, delimiters between optional elements are omitted to save space.
The locations are marked by underlines for intuition.

Take the first dependency for example, for the CPU model \emph{I5}, it means that the second element from backward can be a numeric digit from $0$ to $9$ except $8$, and its value tells the model of video card integrated.


\subsection{Efficiency}
\noindent{\textbf{On Real Datasets}}
\begin{figure}[h]
  \centering
  \includegraphics[width=0.3\textwidth]{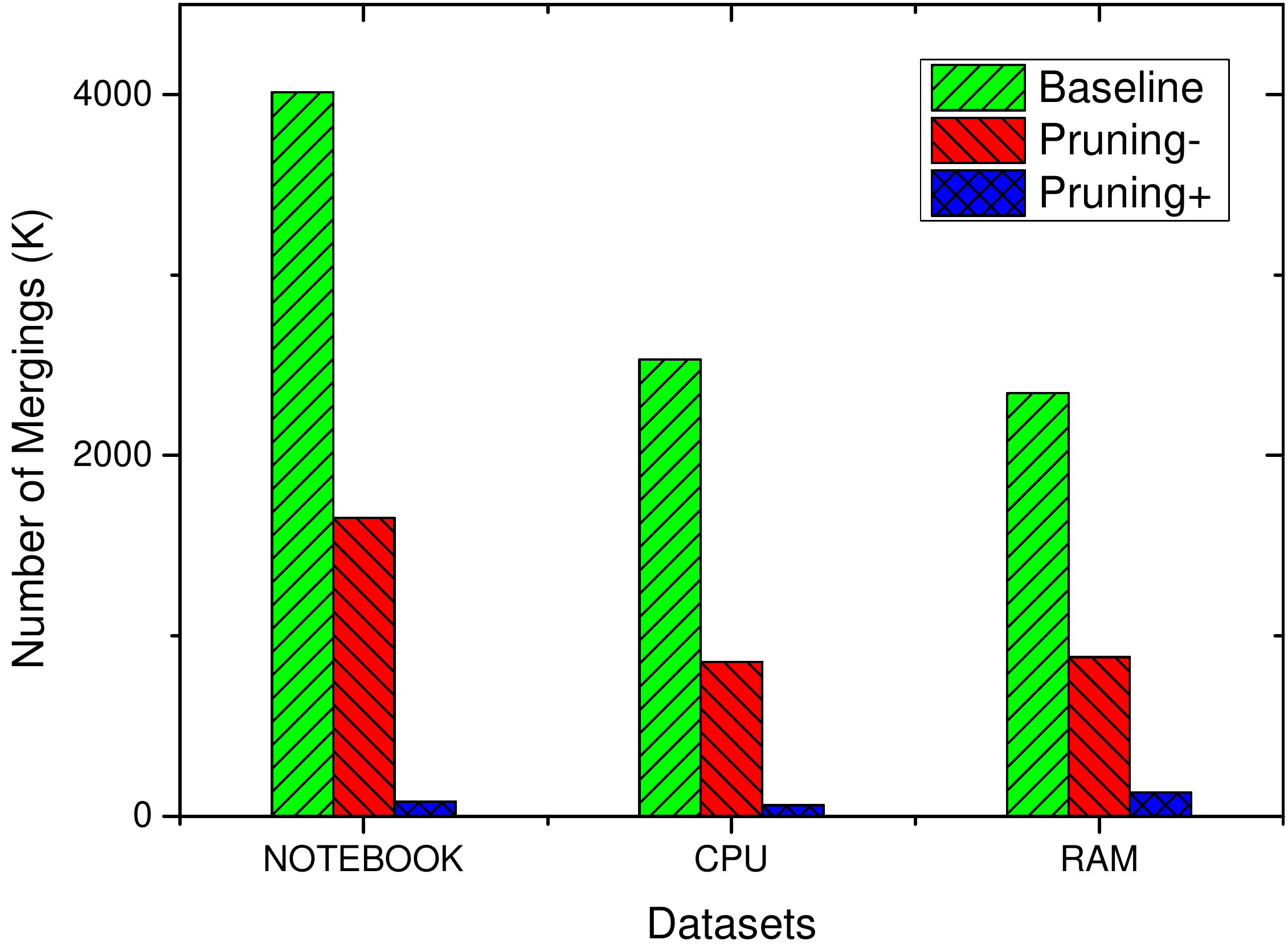}
  \caption{Number of Merging Operations of Different Methods.}
  \label{real_merge}
\end{figure}

We concern the running time, number of merging operations and number of refining operations in different methods.
From Fig. \ref{real_merge} we can see that Baseline is of the highest number of merging, Pruning- reduced that by about two-thirds, and Pruning+ reduced it greatly.

However, in Fig. \ref{real_time}, the time consumption shows a quite different result: Pruning+ remained the most efficient, and Pruning- becomes very inefficient.
That is because although some merging operations are saved, even much more time was wasted in the refining phase, maintaining necessary data structures becomes very costly if refining operation is too frequent.
Fig. \ref{real_refine} shows comparison of the number of refining between Pruning+ and Pruning-.
\begin{figure}[h]
  \centering
  \includegraphics[width=0.3\textwidth]{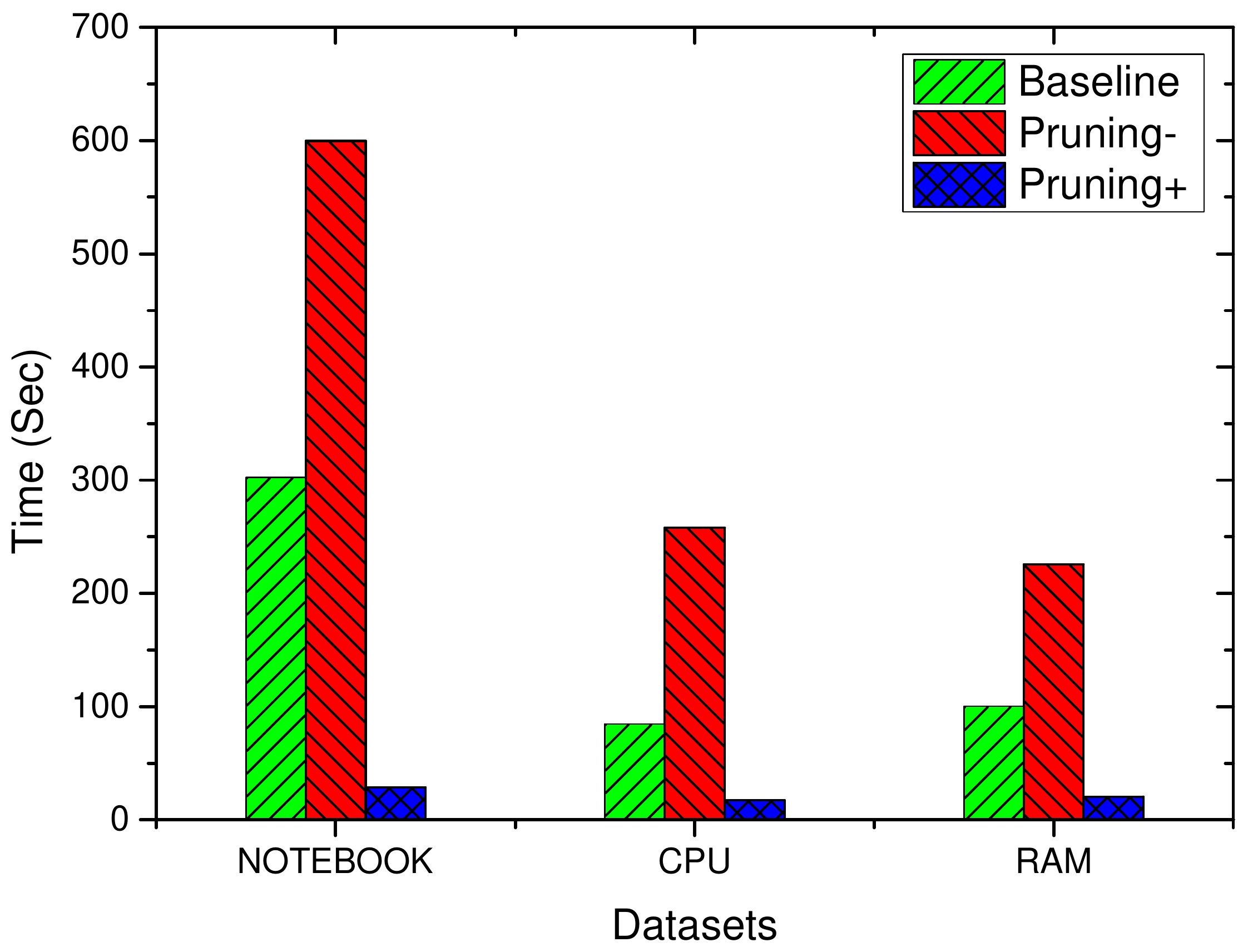}
  \caption{Time Consuming of Different Methods.}
  \label{real_time}
\end{figure}

\begin{figure}[h]
  \centering
  \includegraphics[width=0.3\textwidth]{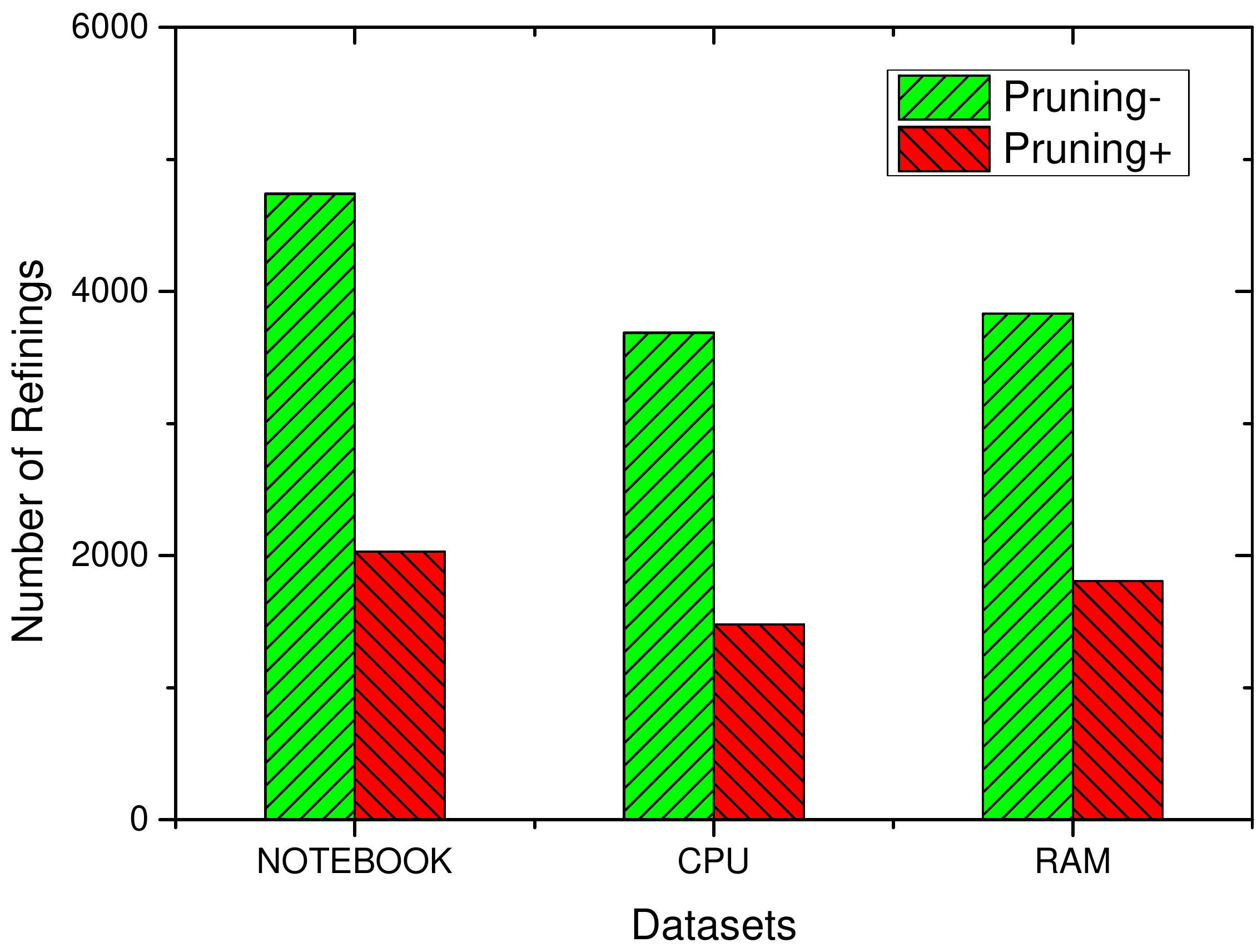}
  \caption{Number of Refine Operations Comparison.}
  \label{real_refine}
\end{figure}

In Fig. \ref{real_inner_refine} are statistics of the number of refines in selecting the best pair of paradigms.
It shows that on all of the three datasets, in most of the time, Pruning+ refined the critical set $CR$ only once (in Line 9 of Algorithm \ref{alg_pruning_merge}).
Even in the worst case, only six refining operations were required.
\begin{figure}[h]
  \centering
  \includegraphics[width=0.3\textwidth]{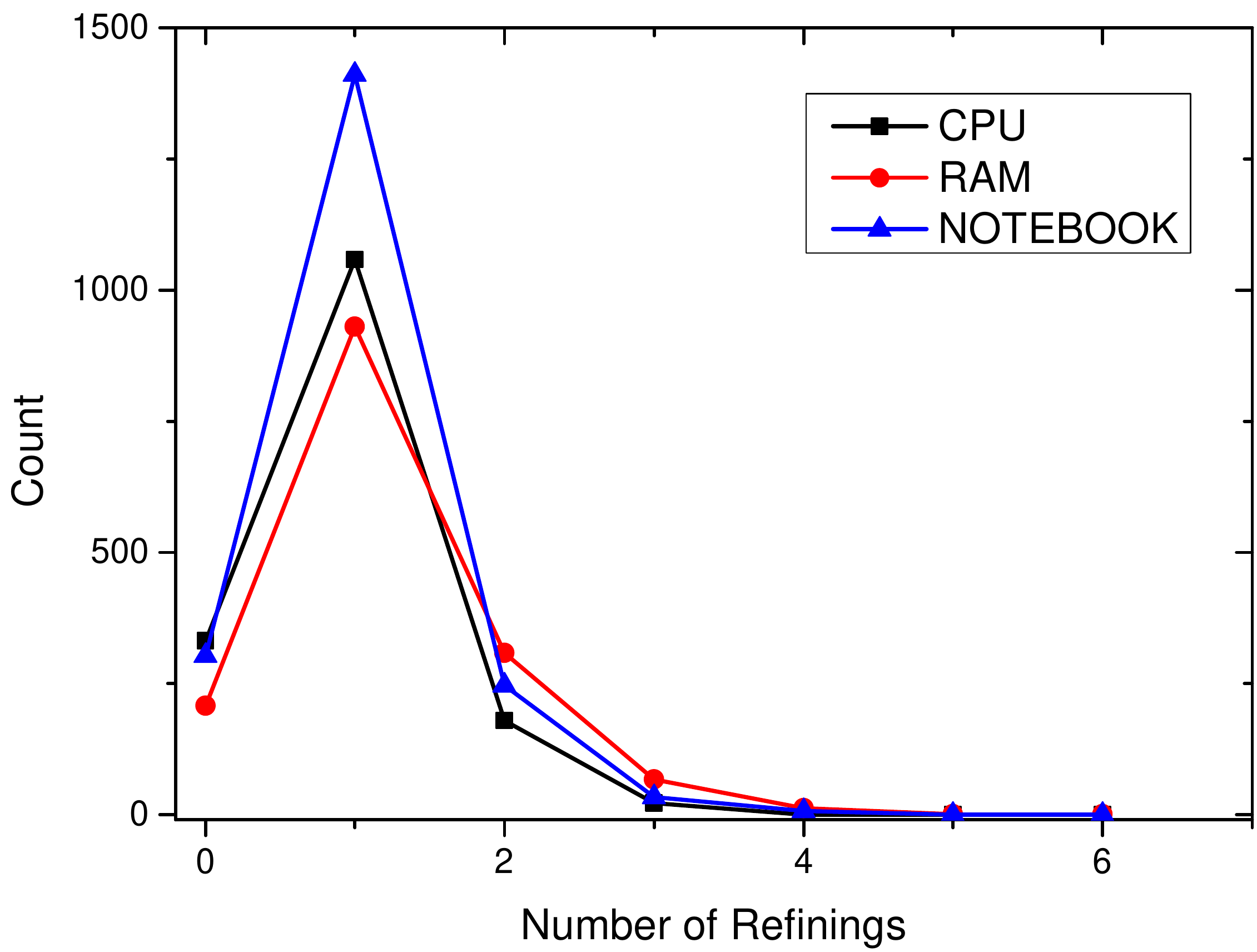}
  \caption{Statistics of Refine Counts in Pruning+.}
  \label{real_inner_refine}
\end{figure}

\noindent{\textbf{On Synthetical Datasets}}
In Fig. \ref{syn_vn_merge}, the number of Baseline's merging operations remains a constant, because it is unrelated to strings' length.
Pruning based methods can reduce the merging numbers to a great extent, especially Pruning+.

\begin{figure}[h]
  \centering
  \includegraphics[width=0.3\textwidth]{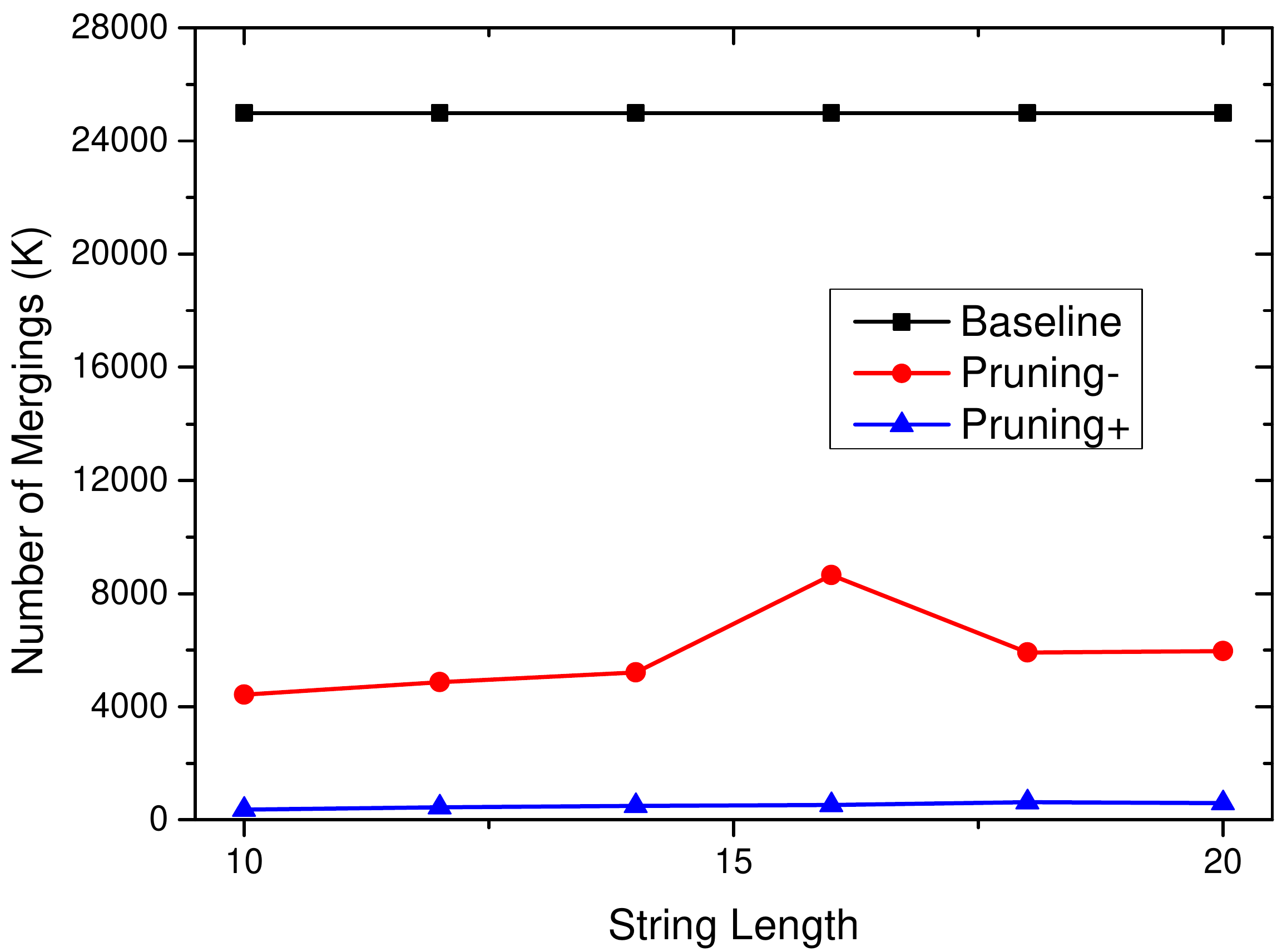}
  \caption{Number of Merging Operations When Varying String Length.}
  \label{syn_vn_merge}
\end{figure}

As it does on real datasets, Pruning- has the worst efficiency, as shown in Fig. \ref{syn_vn_time}.
\begin{figure}[h]
  \centering
  \includegraphics[width=0.3\textwidth]{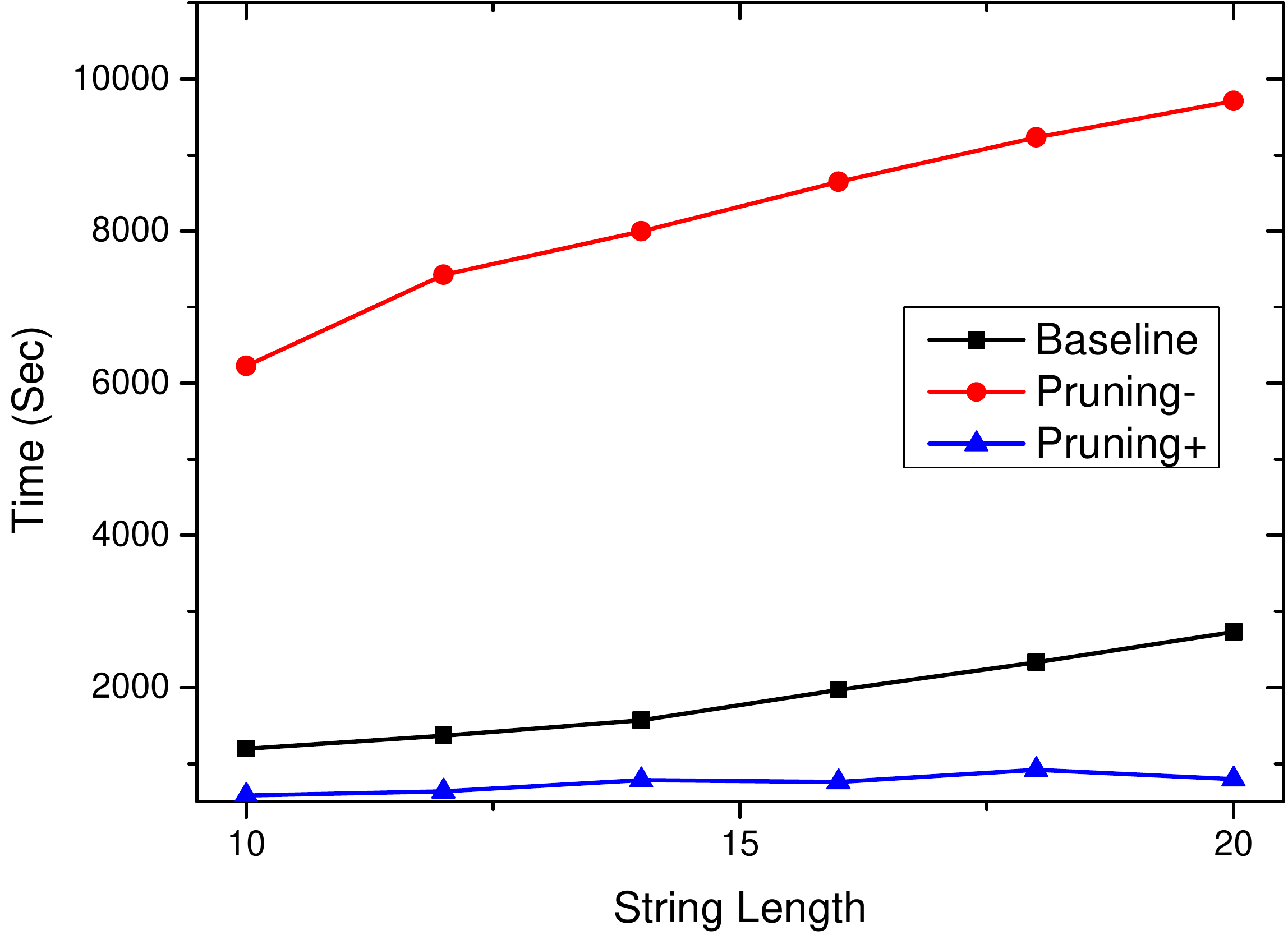}
  \caption{Time Consuming When Varying String Length.}
  \label{syn_vn_time}
\end{figure}

Fig. \ref{syn_vNN_merge} and \ref{syn_vNN_time} illustrate the number of mergings and time consumed in different algorithms, when varying the number of input strings ($N$).
It can be observed that Baseline's merging number is proportional to $N^2$, and Pruning+ has almost linear time costs and merging operations, making it the best algorithm in efficiency.
\begin{figure}[h]
  \centering
  \includegraphics[width=0.3\textwidth]{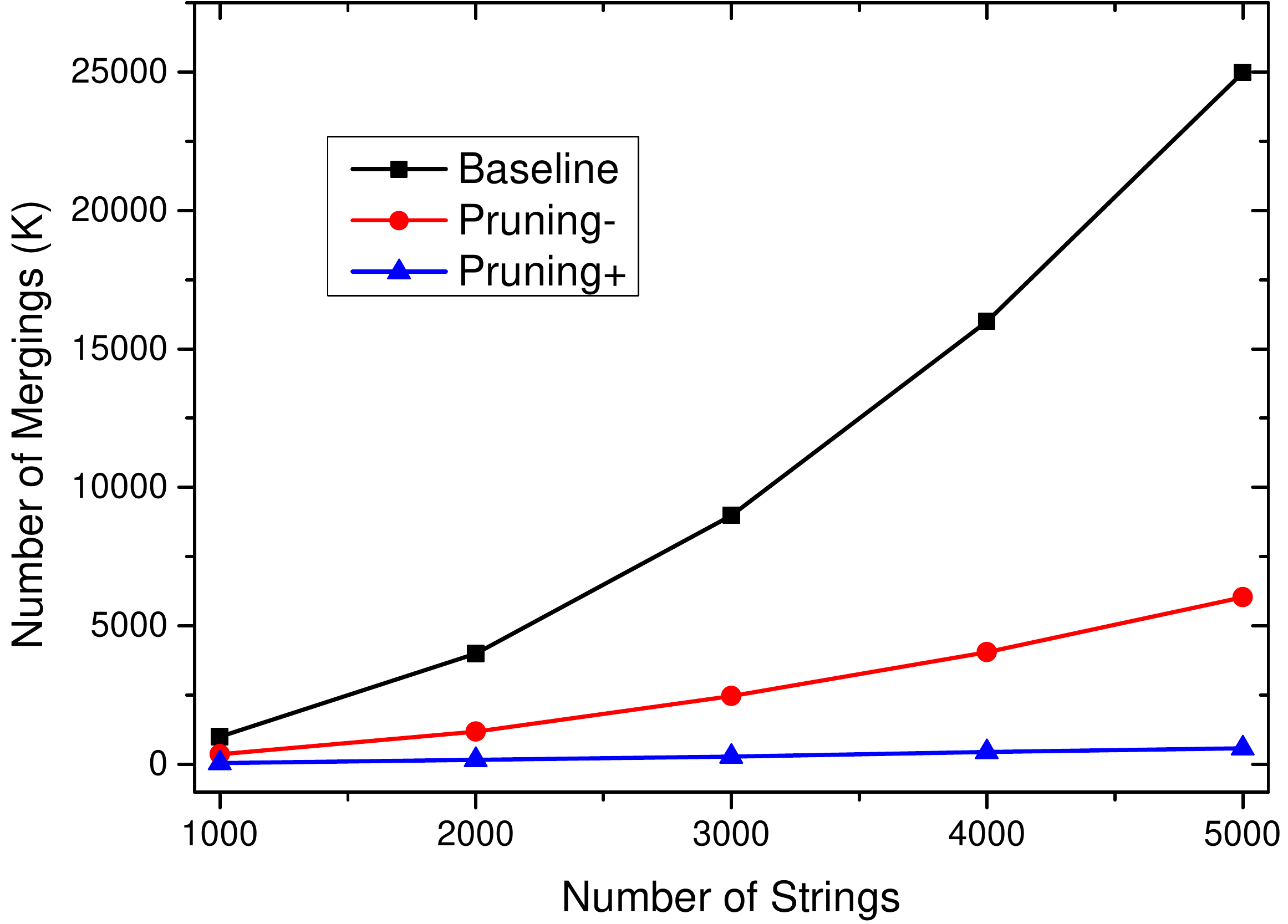}
  \caption{Number of Merging Operations When Varying Number of Strings.}
  \label{syn_vNN_merge}
\end{figure}

\begin{figure}[h]
  \centering
  \includegraphics[width=0.3\textwidth]{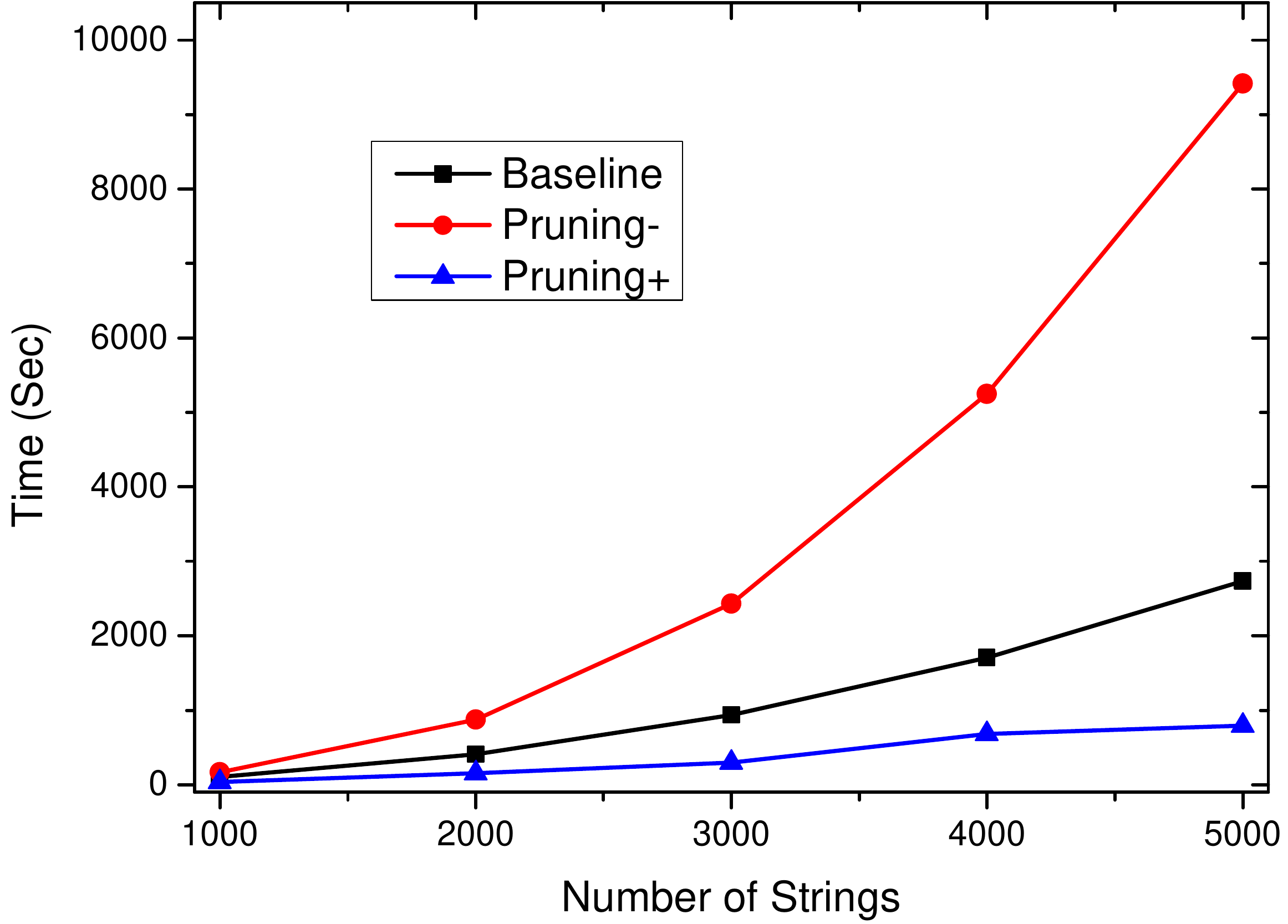}
  \caption{Time Consuming When Varying Number of Strings.}
  \label{syn_vNN_time}
\end{figure}

From Fig. \ref{syn_vcs_time}, it can be seen that by changing number of clusters, or the variation ratio $\sigma$, the running time of Baseline and Pruning+ did not change regularly.

\begin{figure}
  \centering
  \subfigure[]{
    \includegraphics[width=0.225\textwidth]{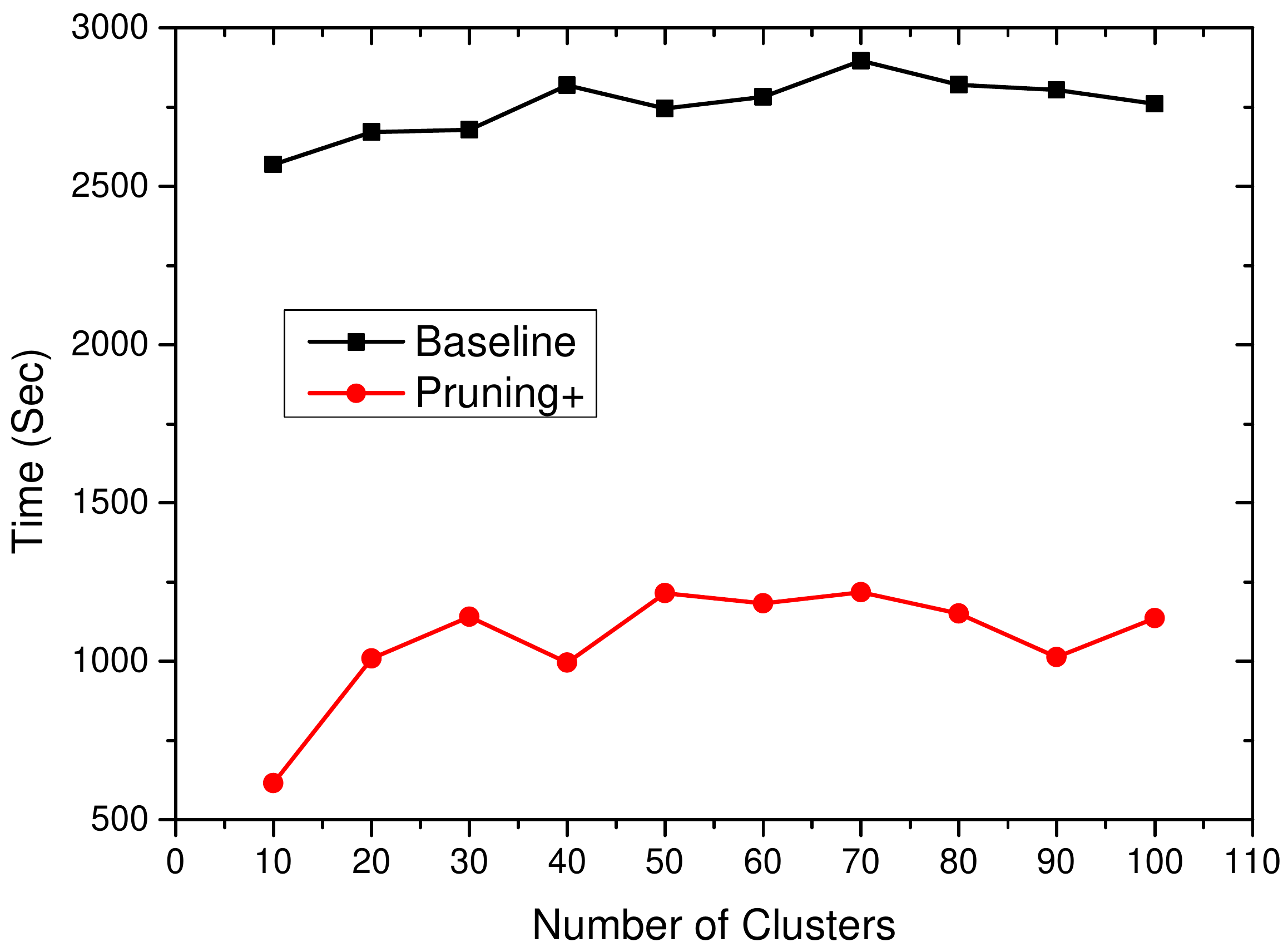}}
  \subfigure[]{
    \includegraphics[width=0.225\textwidth]{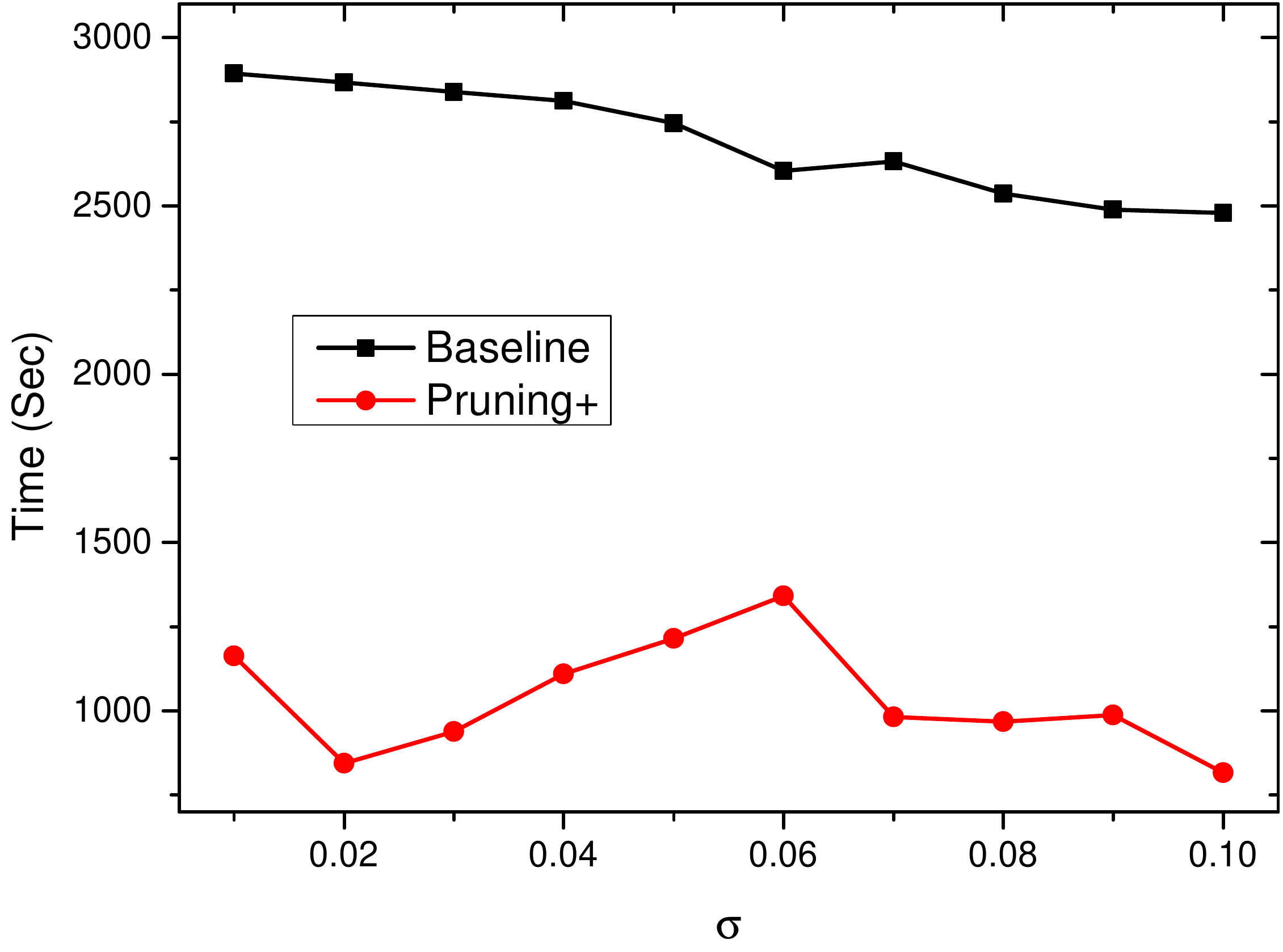}}
  \caption{An example of interval refining}
\label{syn_vcs_time}
\end{figure}



\section{Conclusion}
\label{conclusion}
In this paper, a new type of dependencies, \emph{Paradigm Dependencies} has been proposed, in which the left hand side is part a regular expression like paradigm.
To discover such dependencies, a framework has been proposed to align and cluster meaningful strings simultaneously.
The aligning problem has been proved in \emph{NP-Complete}, and a greedy algorithm was introduced in which the clustering and aligning tasks can be combined together seamlessly.
Due to the greedy algorithm's high time complexity, several pruning strategies with theoretical support were proposed to reduce the running time.
Then discovery of paradigm dependencies on the generated paradigms have been defined and discussed, based on four measurements.
Finally, our methods' effectiveness and efficiency have been verified on three real world datasets as well as synthetical datasets.

For future work, we consider of several aspects: 1) In our framework, dependencies are discovered on already merged paradigms.
Actually, correlations between a paradigm's column and an attribute's values can help aligning the strings more wisely.
So by changing our two-step framework into a iterative and interactive one may improve the effectiveness to some extent.
2) Due to the high complexity, we consider of redesigning the greedy algorithm, e.g., by parallelization, or trading off between effectiveness and efficiency, etc.
3) In the discovering phase, we assumed a single position in the left hand side.
Actually, it is possible that multiple elements together in a string indicate another attribute's value, which makes the problem more complicated.
4) The purpose of introducing paradigm dependencies is to handle dirty data, so it is nature to study data cleaning problems, such as inconsistent data repairing and missing values imputation, using these new proposed rules.

\section*{Acknowledgment}
This paper was partially supported by the Key Research and Development Plan of National Ministry of Science and Technology under grant No. 2016YFB1000703, the Key Program of the National Natural Science Foundation of China under Grant No. 61190115, 61472099, 61632010, U1509216, National Sci-Tech Support Plan 2015BAH10F01, the Scientific Research Foundation for the Returned Overseas Chinese Scholars of Heilongjiang Province LC2016026 and MOE¨CMicrosoft Key Laboratory of Natural Language Processing and Speech, Harbin Institute of Technology.

\end{document}